\numberwithin{equation}{section}
\newtheorem{thm}{Theorem}[section]
\newtheorem{lem}[thm]{Lemma}
\newtheorem{rem}{Remark}[section]
\newtheorem{example}[thm]{Example}
\newtheorem{defin}[thm]{Definition}
\newcommand{\eq}[1]{(\ref{#1})}
\newcommand{\mbr}{\medbreak}
\renewcommand{\Re}{\operatorname{\rm Re}}
\renewcommand{\Im}{\operatorname{\rm Im}}
\newcommand{\beqast}{\begin{eqnarray*}}
\newcommand{\eqast}{\end{eqnarray*}}
\newcommand{\beqa}{\begin{eqnarray}}
\newcommand{\eqa}{\end{eqnarray}}
\newcommand{\bbe}{\begin{equation}}
\newcommand{\ee}{\end{equation}}
\renewcommand{\Re}{\operatorname{\rm Re}}
\renewcommand{\Im}{\operatorname{\rm Im}}
\newcommand{\bC}{{\mathbb C}}
\newcommand{\bE}{{\mathbb E}}
\newcommand{\bN}{{\mathbb N}}
\newcommand{\bQ}{{\mathbb Q}}
\newcommand{\bR}{{\mathbb R}}
\newcommand{\bZ}{{\mathbb Z}}
\newcommand{\cK}{{\mathcal K}}
\newcommand{\cF}{{\mathcal F}}
\newcommand{\cE}{{\mathcal E}}
\newcommand{\cG}{{\mathcal G}}
\newcommand{\cL}{{\mathcal L}}
\newcommand{\cC}{{\mathcal C}}
\newcommand{\barX}{{\bar X}}
\newcommand{\uX}{{\underline X}}
\newcommand{\cEq}{{\mathcal E_q}}
\newcommand{\cEpq}{{\mathcal E^+_q}}
\newcommand{\cEmq}{{\mathcal E^-_q}}
\newcommand{\phipq}{{\phi^+_q}}
\newcommand{\phimq}{{\phi^-_q}}
\newcommand{\tV}{{\tilde V}}
\newcommand{\hG}{{\hat G}}
\newcommand{\hu}{{\hat u}}
\newcommand{\hU}{{\hat U}}
\newcommand{\hW}{{\hat W}}
\newcommand{\htV}{{\hat{\tilde V}}}
\newcommand{\Om}{{\Omega}}
\newcommand{\be}{\beta}
\newcommand{\De}{\Delta}
\newcommand{\de}{\delta}
\newcommand{\eps}{\epsilon}
\newcommand{\ka}{\kappa}
\newcommand{\la}{\lambda}
\newcommand{\lp}{\lambda_+}
\newcommand{\lm}{\lambda_-}
\newcommand{\La}{\Lambda}
\newcommand{\mum}{\mu_-}
\newcommand{\mup}{\mu_+}
\newcommand{\mumpr}{\mu'_-}
\newcommand{\muppr}{\mu'_+}
\newcommand{\sg}{\sigma}
\newcommand{\om}{\omega}
\newcommand{\omm}{\om_-}
\newcommand{\ze}{\zeta}
\newcommand{\ga}{\gamma}
\newcommand{\gap}{\gamma_+}
\newcommand{\gam}{\gamma_-}
\newcommand{\gappr}{\gamma'_+}
\newcommand{\gampr}{\gamma'_-}
\newcommand{\Ga}{\Gamma}
\newcommand{\bfo}{{\bf 1}}
\begin{document}

\title[Pricing double-barrier options]
{Efficient evaluation of double-barrier options and  joint cpdf of a L\'evy process and its two extrema}
\author[
Svetlana Boyarchenko and
Sergei Levendorski\u{i}]
{
Svetlana Boyarchenko and
Sergei Levendorski\u{i}}

\begin{abstract}
In the paper, we develop a very fast and accurate method for pricing double barrier options with continuous monitoring in
wide classes of  L\'evy models; the calculations are in the dual space, and the Wiener-Hopf factorization
is used. 
 For wide regions in the parameter space, the precision of the order of $10^{-15}$ is achievable in seconds, and of the order of $10^{-9}-10^{-8}$ - in  fractions of a second. The Wiener-Hopf factors and repeated integrals in the pricing formulas
 are calculated using sinh-deformations of the lines of integration, the corresponding changes of variables
 and the simplified trapezoid rule. If the Bromwich integral is calculated using the Gaver-Wynn Rho acceleration
 instead of the sinh-acceleration, the CPU time is typically smaller but the precision is of the order of $10^{-9}-10^{-6}$, at best.
Explicit pricing algorithms and numerical examples are for no-touch options, digitals (equivalently, for
the joint distribution function of a L\'evy process and its supremum and infimum processes), and call options.
Several graphs are produced to explain fundamental difficulties for accurate pricing of barrier options using time discretization
 and interpolation-based calculations in the state space.

\end{abstract}

\thanks{
\emph{S.B.:} Department of Economics, The
University of Texas at Austin, 2225 Speedway Stop C3100, Austin,
TX 78712--0301, {\tt sboyarch@utexas.edu} \\
\emph{S.L.:}
Calico Science Consulting. Austin, TX.
 Email address: {\tt
levendorskii@gmail.com}}

\maketitle

\noindent
{\sc Key words:} L\'evy process, extrema of a L\'evy process, double barrier options, Fourier transform, 
 Gaver-Wynn Rho algorithm, sinh-acceleration

\noindent
{\sc MSC2020 codes:} 60-08,42A38,42B10,44A10,65R10,65G51,91G20,91G60


\section{Introduction}\label{s:intro}
Let $X$ be a one-dimensional L\'evy process on the filtered probability space $(\Om, \cF, \{\cF_t\}_{t\ge 0}, \bQ)$
satisfying the usual conditions.
  We denote the expectation operator under $\bQ$ by $\bE$. In a number of publications, various methods were applied 
to calculation of expectations $V(f;T;x_1,x_2)$ of functions of spot value $x_1$ of $X$ and its running maximum or minimum $x_2$ and related optimal stopping problems,
standard examples being  barrier and American options, and lookback options with barrier and/or American features. 
See, e.g., \cite{GSh,barrier-RLPE,NG-MBS,BLSIAM02,amer-put-levy,AAP,AKP,KW1,kou,MSdouble,single,BLdouble,beta,KudrLev11,BIL,paraLaplace,HaislipKaishev14,FusaiGermanoMarazzina,kirkbyJCompFinance18,Linetsky08,feng-linetsky09,LiLinetsky2015,BarrStIR,EfficientLevyExtremum,EfficientDiscExtremum} and the bibliographies therein. Options with discrete and continuous monitoring were considered.

In the paper, we develop very fast and accurate method for pricing double barrier options with continuous monitoring in L\'evy models, without rebate: 
if the underlying factor $X$ breaches one of the two barriers
 $h_-<h_+$ before or on the maturity date $T$, the option expires worthless. In neither of the barriers are breached, the payoff
 is $G(X_T)$. The method of the paper can be modified to the case of double barrier options with rebate and to the case of options with discrete monitoring. Pricing barrier options without rebate trivially reduces to the calculation of expectations in a model
  with 0 interest rate $r$, hence, in the main body of the paper, we assume that $r=0$, and allow $\bQ$ to be not an equivalent martingale measure. Explicit algorithms are formulated
  and numerical examples produced for no-touch options, double barrier options with digital payoff (equivalently, joint pdf of $X_T$, 
the infimum process $\uX_t=\inf_{0\le s\le t}X_s$ and supremum
 process $\barX_t=\sup_{0\le s\le t}X_s$), and double barrier call options.

At the first step, we impose no condition on the L\'evy process, and assume that the payoff function $G$ is measurable and bounded. 
Denote by $V(G;h_-,h_+;T,x)$ the price of the option.
 We represent the Laplace transform $\tV(G;h_-,h_+;q,x)$ of the price  in the form
of a sum of the present value of the perpetual stream $G(X_t)$, the discount rate being $q$, and the sum of two series of prices
of perpetual first touch options. 
As in \cite{BLdouble}, we represent the prices
of the perpetual options using the technique of the expected present value operators (EPV operators) developed in a series of publications
\cite{barrier-RLPE,NG-MBS,BLSIAM02,IDUU,single}. The EPV technique is the operator form of the Wiener-Hopf factorization.
In op.cit. as well in a number of other publications where the EPV technique was used, the numerical realization is in the state space. 
 If the calculations are in the state space, the  precision of the order of $10^{-6}$ is very difficult to achieve unless high precision arithmetic is used. In \cite{paraLaplace,paired}, the calculations are in the dual space.
   The fractional-parabolic deformations of the contours of integration at each step of the iteration procedure allow one
   to calculate the integrals in the formulas for the value functions using the simplified trapezoid rule with large but not exceedingly large
   number of terms. In the result, it is possible to
   achieve a precision better that $10^{-6}$. However, in some cases, the accumulated errors of calculation of dozens of thousand of terms
   are too large to achieve a higher precision. 

In the present paper, 
we apply a more efficient family of sinh-deformations used in \cite{SINHregular} to
price European options in L\'evy and affine models and in \cite{Contrarian,EfficientLevyExtremum} to price single barrier options and lookback options.  After appropriate conformal deformations of the contours of integration and the corresponding changes of variables, 
we evaluate the resulting repeated integrals applying the simplified trapezoid rule to each integral. The analyticity of the integrands
around the line of integration imply that the error of the infinite trapezoid rule decays as the exponential of $-1/\ze$, where $\ze$ is the step
of the trapezoid rule (see, e.g., Theorem 3.2.1 in \cite{stenger-book}); the sinh-change of variable lead to integrands which decay faster than the exponential function, hence, the truncation error decays faster than the discretization error. 
For the Laplace inversion, we use two methods: the  Gaver-Wynn Rho acceleration algorithm (GWR algorithm), and the sinh-acceleration, as in \cite{EfficientLevyExtremum}; earlier, we applied a less efficient fractional-parabolic deformations \cite{paraLaplace}.
The method of the paper can be regarded as a further step in  a general program of study of the efficiency
of combinations of one-dimensional inverse transforms for high-dimensional inversions systematically pursued by
Abate-Whitt, Abate-Valko \cite{AbWh,AbWh92OR,AbateValko04,AbValko04b,AbWh06} and other authors. Additional methods can be found in \cite{stenger-book}.
The sinh-acceleration is simpler to apply than the saddle-point method (see., e.g., \cite{fedoryuk}), and the calculation of individual terms in
numerical realizations is much simpler and less time consuming. The former method is
more flexible than the latter, in applications to repeated integrals especially. The rates 
of convergence of the sinh-acceleration method and saddle point method are  approximately the same. Talbot's deformation \cite{Talbot79} is not applicable together with the sinh-deformations
of the other contours of integration, hence, the CPU time is significantly larger and good precision is impossible to achieve in many cases where the method of the paper is very efficient.

The rest of the paper is organized as follows. In Section \ref{s:auxil}, we recall the basic facts of the Wiener-Hopf factorization technique,
the definitions of the classes of L\'evy processes amenable to efficient calculations, and efficient formulas for the Wiener-Hopf factors.
In Section \ref{s:double}, we reproduce the iteration procedure for evaluation of the Laplace transform of the price
in terms of the EPV-operators (factors in the operator form of the Wiener-Hopf factorization) derived in \cite{BLdouble}, and then
design efficient procedure in the dual space, which significantly improves the procedure used in \cite{paired}. Explicit algorithms for
the double no-touch, digital and call options are in Section \ref{s:algo_numer}.
Numerical examples are discussed in Section \ref{s:numer}; the figures and tables are relegated to Section \ref{ss:Fig and tables}.
We plot several  graphs to demonstrate fundamental difficulties for accurate pricing of barrier options using time discretization
 and interpolation-based calculations in the state space.
In Section \ref{concl}, we summarize the results and outline possible applications of the method of the paper.

 \section{Auxilliary results}\label{s:auxil}
 
 \subsection{Wiener-Hopf factorization}\label{ss:WHF}
 
 \subsubsection{General formulas for the Wiener-Hopf factors}\label{sss:genWHF}
 For $q>0$, let $T_q\sim\operatorname{Exp}q$ be an
exponentially distributed random variable with mean $q^{-1}$, independent of $X$.
In probability, the 
Wiener-Hopf factors are defined as
\begin{equation}\label{defphipm}
\phi^+_q(\xi)=\bE[e^{i\xi \barX_{T_q}}],\quad \phi^-_q(\xi)=\bE[e^{i\xi \uX_{T_q}}].
\end{equation}
Functions $\phi^\pm_q(\xi)$ appear in the Wiener-Hopf factorization formula
\begin{equation}\label{whfprob}
\frac{q}{q+\psi(\xi)}=\phi^+_q(\xi)\phi^-_q(\xi),\quad \xi\in\bR,
\end{equation}
which is a special case of the Wiener-Hopf factorization of functions of more general classes. 
Define the expected present value operators (EPV-operators) under $X$, $\barX$ and $\uX$ (all three start at 0) by 
$\cE_q u(x)=\bE\left[u(x+X_{T_q})\right]$, $\cE^+_q u(x)=\bE\left[u(x+\barX_{T_q})\right]$ and
$\cE^-_q u(x)=\bE\left[u(x+\uX_{T_q})\right]$. Clearly, the EPV operators are bounded operators in $L_\infty(\bR)$.
In the case of L\'evy processes
with exponentially decaying L\'evy densities, the EPV operators are bounded operators in spaces with exponential weights. For details, see \cite{NG-MBS,single,BLdouble}. 
The operator version of \eqref{whfprob} 
\begin{equation}\label{operWHF}
\cE_q=\cE^+_q\cE^-_q=\cE^-_q\cE^+_q
\end{equation}
is a special case of the operator form of the Wiener-Hopf factorization in the theory of boundary problems for
differential and pseudo-differential operators (pdo). Indeed,  $\cE^\pm_q e^{ix\xi}=\phi^\pm_q(\xi) e^{ix\xi}$.   
 This means that
 $\cE^\pm_q$ are pdo with symbols $\phi^\pm_q$, and $\cE^\pm_qu(x)=\cF^{-1}_{\xi\to x}\phi^\pm_q(\xi)\cF_{x\to\xi}u(x)$ for sufficiently regular functions $u$.  See, e.g., \cite{eskin}.
 The Wiener-Hopf factor $\phi^+_q(\xi)$ (resp., $\phi^-_q(\xi)$) admits analytic continuation to
the half-plane $\{\Im\xi>0\}$ (resp., $\{\Im\xi<0\}$).

The characteristic exponents of all popular classes of L\'evy processes
bar stable L\'evy processes admit analytic continuation to a strip around the real axis. See \cite{NG-MBS,barrier-RLPE,BLSIAM02}, where the general class of Regular L\'evy processes of
exponential type (RLPE) is introduced. 
Let $X$ be a L\'evy process with the characteristic exponent admitting analytic continuation to
a strip $\{\Im\xi\in (\mum,\mup)\}$ around the real axis, and let $q>0$. Then (see., e.g., \cite{NG-MBS,barrier-RLPE,paired})
the following statements hold
\mbr\noindent
I.  There exist
$\sg_-(q)<0<\sg_+(q)$ such that
\begin{equation}\label{crucial}
q+\psi(\eta)\not\in (-\infty,0],\quad \Im\eta\in (\sg_-(q),\sg_+(q)).
\end{equation}
\mbr\noindent
II. The Wiener-Hopf factor $\phi^+_q(\xi)$ admits analytic continuation
to the half-plane $\{\Im\xi>\sg_-(q)\}$, and can be calculated as follows: for any $\om_-\in (\sg_-(q), \Im\xi)$,
\begin{eqnarray}\label{phip1}
\phi^+_q(\xi)&=&\exp\left[\frac{1}{2\pi i}\int_{\Im\eta=\om_-}\frac{\xi \ln (1+\psi(\eta)/q)}{\eta(\xi-\eta)}d\eta\right].
\end{eqnarray}
\mbr\noindent
III. The Wiener-Hopf factor $\phi^-_q(\xi)$ admits analytic continuation
to the half-plane $\{\Im\xi<\sg_+(q)\}$, and can be calculated as follows: for any $\om_+\in (\Im\xi, \sg_+(q))$,
\begin{eqnarray}\label{phim1}
\phi^-_q(\xi)&=&\exp\left[-\frac{1}{2\pi i}\int_{\Im\eta=\om_+}\frac{\xi \ln (1+\psi(\eta)/q)}{\eta(\xi-\eta)}d\eta\right].
\end{eqnarray}
We can (and will) use \eqref{whfprob} and \eq{phip1} to calculate $\phimq(\xi)$ on $\{\Im\xi\in (0,\sg^+_q)\}$, 
and \eqref{whfprob} and \eq{phim1} to calculate $\phipq(\xi)$ on $\{\Im\xi\in (\sg^-_q,0)\}$.

  \subsection{General classes of L\'evy processes amenable to efficient calculations}\label{ss:gen_eff_Levy}
Essentially all popular classes  of L\'evy processes processes enjoy additional properties formalized in \cite{SINHregular,EfficientAmenable}. 

For $\nu=0+$ (resp., $\nu=1+$), set $|\xi|^\nu=\ln|\xi|$ (resp., $|\xi|^\nu=|\xi|\ln|\xi|$), and introduce the following complete ordering in
the set $\{0+,1+\}\cup (0,2]$: the usual ordering in $(0,2]$; $\forall\ \nu>0, 0+<\nu$; $\forall\ \nu>1, 1<1+<\nu$.
For $\ga\in (0,\pi]$, $\pm\ga_\pm\in (0,\pi2]$ and $\mum<\mup$, define
 $\cC_{\gam,\gap}=\{e^{i\varphi}\rho\ |\ \rho> 0, \varphi\in (\gam,\gap)\cup (\pi-\gap,\pi-\gam)\}$, 
 $\cC_{\ga}=\{e^{i\varphi}\rho\ |\ \rho> 0, \varphi\in (-\ga,\ga)\}$, $S_{(\mum,\mup)}=\{\xi\ |\ \Im\xi\in (\mum,\mup)\}$.

 \begin{defin}\label{def:SINH_reg_proc_1D0}(\cite[Defin 2.1]{EfficientAmenable})
 We say that $X$ is a SINH-regular L\'evy process  (on $\bR$) of   order
 $\nu$ and type $((\mum,\mup);\cC; \cC_+)$
 iff
the following conditions are satisfied:
\begin{enumerate}[(i)]
\item
 $\nu\in\{0+,1+\}\cup (0,2]$ and $\mum<0\le \mup$ or $\mum\le 0<\mup$;
\item
$\cC=\cC_{\gam,\gap}, \cC_+=\cC_{\gampr,\gappr}$, where $\gam<0<\gap$, $\gam\le \gampr\le 0\le \gappr\le \gap$,
and $|\gampr|+\gappr>0$; 
\item
the characteristic exponent $\psi$ of $X$ can be represented in the form
\bbe\label{eq:reprpsi}
\psi(\xi)=-i\mu\xi+\psi^0(\xi),
\ee
where $\mu\in\bR$, and 
$\psi^0$ admits analytic continuation to $i(\mum,\mup)+ (\cC\cup\{0\})$;
\item
for any $\varphi\in (\gam,\gap)$, there exists $c_\infty(\varphi)\in \bC\setminus (-\infty,0]$ s.t.
\begin{equation}\label{asympsisRLPE}
\psi^0(\rho e^{i\varphi})\sim  c_\infty(\varphi)\rho^\nu, \quad \rho\to+\infty;
\end{equation}
\item
the function $(\gam,\gap)\ni \varphi\mapsto c_\infty(\varphi)\in \bC$ is continuous;
\item
for any $\varphi\in (\gampr, \gappr)$, $\Re c_\infty(\varphi)>0$.
\end{enumerate}
\end{defin}
To simplify the constructions in the paper, we assume that $\mum<0<\mup$ and $\gampr<0<\gappr$. 
\begin{example}\label{ex:KoBoL}{\rm  In \cite{genBS,KoBoL}, we constructed a family of pure jump processes
generalizing the class of  \cite{koponen},  with the L\'evy measure
\begin{equation}\label{KBLmeqdifnu}
F(dx)=c_+e^{\lm x}x^{-\nu_+-1}\bfo_{(0,+\infty)}(x)dx+
 c_-e^{\lp x}|x|^{-\nu_--1}\bfo_{(-\infty,0)}(x)dx,
\end{equation}
where $c_\pm>0, \nu_\pm\in [0,2), \lm<0<\lp$.  If $\nu_\pm\in (0,2), \nu_\pm\neq 1$,
\bbe\label{KBLnupnumneq01}
\psi^0(\xi)=c_+\Ga(-\nu_+)((-\lm)^{\nu_+}-(-\lm-i\xi)^{\nu_+})+c_-\Ga(-\nu_-)(\lp^{\nu_-}-(\lp+i\xi)^{\nu_-}).
\ee
 A specialization
 $\nu_\pm=\nu\neq 1$, $c=c_\pm>0$, of KoBoL used in a series of numerical examples in \cite{genBS} was named CGMY model in \cite{CGMY} (and the labels were changed:
 letters $C,G,M,Y$ replace the parameters $c,\nu,\lm,\lp$ of KoBoL):
 \bbe\label{KBLnuneq01}
 \psi^0(\xi)= c\Ga(-\nu)[(-\lm)^{\nu}-(-\lm- i\xi)^\nu+\lp^\nu-(\lp+ i\xi)^\nu].
\ee
Evidently, $\psi^0$ given by \eq{KBLnuneq01} is analytic in $\bC\setminus i\bR$, and  $\forall\ \varphi\in (-\pi/2,\pi/2)$, \eq{asympsisRLPE} holds with
\bbe\label{ascofnupeqnumcc}
c_\infty(\varphi)=-2c\Ga(-\nu)\cos(\nu\pi/2)e^{i\nu\varphi}.
\ee
}
\end{example} 
In \cite{EfficientAmenable}, we defined a class of Stieltjes-L\'evy processes (SL-processes).  Essentially, $X$ is called a (signed) SL-process if $\psi$ is of the form
\bbe\label{eq:sSLrepr}
\psi(\xi)=(a^+_2\xi^2-ia^+_1\xi)ST(\cG^0_+)(-i\xi)+(a^-_2\xi^2+ia^-_1\xi)ST(\cG^0_-)(i\xi)+(\sg^2/2)\xi^2-i\mu\xi, 
\ee
where $ST(\cG)$ is the Stieltjes transform of the (signed) Stieltjes measure $\cG$,  $a^\pm_j\ge 0$, and $\sg^2\ge0$, $\mu\in\bR$.
A (signed) SL-process is called regular if it is SINH-regular. We proved in \cite{EfficientAmenable} that 
the characteristic exponent $\psi$ of 
a (signed) SL-process admits analytic continuation to the complex plane with two cuts along the imaginary axis.
If $X$ is SL-process, then, for any $q>0$, equation $q+\psi(\xi)=0$ has no solution on $\bC\setminus i\bR$.
We proved that all popular classes of L\'evy processes bar the Merton model and Meixner processes are regular SL-processes, with $\ga_\pm=\pm \pi/2$;
the Merton model and Meixner processes are regular signed SL-processes, and $\ga_\pm=\pm \pi/4$.
In \cite{EfficientAmenable}, the reader can find a list of SINH-processes and SL-processes, with calculations of the order and type.

\subsection{Evaluation of  the Wiener-Hopf factors and sinh-acceleration}\label{ss:expl_WHF}
The integrands on the RHSs of \eq{phip1} and \eq{phim1} decay very slowly at infinity, hence,
it is impossible to achieve a good precision without additional tricks. If $X$ is SINH-regular, the rate of decay of the
integrands can be significantly increased using appropriate conformal deformations of the line of integration
 and the corresponding changes of variables. Assuming that in Definition \ref{def:SINH_reg_proc_1D0},  $\ga_\pm$ are not extremely small is absolute value (and, in the case of  regular SL-processes, $\ga_\pm=\pm \pi/2$ are not small), the most efficient change of variables
is the sinh-acceleration 
   \bbe\label{eq:sinh}
 \eta=\chi_{\om_1,b,\om}(y)=i\om_1+b\sinh(i\om+y), 
\ee
where $\om\in (-\pi/2,\pi/2)$, $\om_1\in \bR, b>0$.
Typically, the sinh-acceleration is the best choice even if $|\ga_\pm|$ are of the order of $10^{-5}$.  The parameters $\om_1,b,\om$ are chosen so that the contour $\cL_{\om_1,b,\om}:=\chi_{\om_1,b,\om}(\bR)\subset i(\mup,\mup)+\cC_{\gam,\gap}$ and, in the process of deformation, $\ln(1+\psi(\eta)/q)$ is a well-defined analytic function on a domain in $\bC$ or an appropriate Riemann surface.  \begin{lem}\label{lem:WHF-SINH}
Let $X$ be SINH-regular of type $((\mum,\mup), \cC_{\gam,\gap}, \cC_{\gampr,\gappr})$. 

Then  there exists $\sg>0$ s.t. for all $q>\sg$, 
\begin{enumerate}[(i)]
\item
$\phipq(\xi)$ admits analytic continuation to $i(\mum,+\infty)+i(\cC_{\pi/2-\gam}\cup\{0\})$. For any $\xi\in i(\mum,+\infty)+i(\cC_{\pi/2-\gam}\cup\{0\})$, and any contour 
$\cL^-_{\om^-_1,b^-,\om^-}\subset i(\mum,\mup)+(\cC_{\gam,\gap}\cup\{0\})$ lying below $\xi$,
\bbe\label{phipq_def}
\phipq(\xi)=\exp\left[\frac{1}{2\pi i}\int_{\cL^-_{\om^-_1,b^-,\om^-}}\frac{\xi \ln (1+\psi(\eta)/q)}{\eta(\xi-\eta)}d\eta\right];
\ee
\item
$\phimq(\xi)$ admits analytic continuation to $i(-\infty,\mup)-i(\cC_{\pi/2+\gap}\cup\{0\})$. For any $\xi\in i(\-\infty,\mup)-i(\cC_{\pi/2+\gap}\cup\{0\})$, and any contour 
$\cL^+_{\om^+_1,b^+,\om^+}\subset i(\mum,\mup)+(\cC_{\gam,\gap}\cup\{0\})$ lying above $\xi$,
\bbe\label{phimq_def}
\phimq(\xi)=\exp\left[-\frac{1}{2\pi i}\int_{\cL^+_{\om^+_1,b^+,\om^+}}\frac{\xi \ln (1+\psi(\eta)/q)}{\eta(\xi-\eta)}d\eta\right].
\ee
\end{enumerate}
\end{lem}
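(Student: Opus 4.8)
The plan is to derive \eqref{phipq_def} and \eqref{phimq_def} from the half-plane representations \eqref{phip1} and \eqref{phim1} by deforming the horizontal line of integration onto the sinh-contour, using Cauchy's theorem, and then to read off the analytic continuation of $\phipq$ (resp. $\phimq$) directly from the resulting integral. I describe the argument for (i); statement (ii) follows either by repeating it starting from \eqref{phim1}, or by applying (i) to the dual process $-X$ and using $\phimq(\xi)=\phi^{+,-X}_q(-\xi)$, $\barX^{-X}=-\uX$, which interchanges $(\mum,\mup)\leftrightarrow(-\mup,-\mum)$ and $\cC_{\gam,\gap}\leftrightarrow\cC_{-\gap,-\gam}$ and takes the domain in (i) to the one in (ii).

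First I would fix the threshold $\sg$ and the branch of the logarithm. Using the representation $\psi=-i\mu\xi+\psi^0$ from \eqref{eq:reprpsi}, the analyticity of $\psi^0$ on $i(\mum,\mup)+(\cC_{\gam,\gap}\cup\{0\})$, the asymptotics \eqref{asympsisRLPE} with $c_\infty(\varphi)\in\bC\setminus(-\infty,0]$ depending continuously on $\varphi$, and \eqref{crucial} on the imaginary axis, one shows that there is $\sg>0$ such that for every $q>\sg$ the function $1+\psi(\eta)/q$ omits the cut $(-\infty,0]$ on $i[\mumpr,\muppr]+(\cC_{\gam,\gap}\cup\{0\})$ for any $\mum<\mumpr<0<\muppr<\mup$; the hypothesis $c_\infty(\varphi)\notin(-\infty,0]$ is exactly what keeps $q+\psi(\eta)$ off the cut as $\eta\to\infty$ inside the cone, and for (signed) regular SL-processes one may instead invoke the non-vanishing of $q+\psi$ off $i\bR$. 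Hence the principal branch of $\ln(1+\psi(\eta)/q)$ is a single-valued analytic function on a neighbourhood of that region (on a Riemann surface in the degenerate cases not covered by $q>\sg$), and since $\psi(0)=0$ the quotient $\ln(1+\psi(\eta)/q)/\eta$ extends analytically across $\eta=0$, so the integrand of \eqref{phip1} has its only singularity at the simple pole $\eta=\xi$.

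Next come the estimates at infinity: from \eqref{asympsisRLPE} (and the linear term when $\nu\le1$) one gets $|\ln(1+\psi(\eta)/q)|=O(\log|\eta|)$ as $\eta\to\infty$ in $i[\mumpr,\muppr]+\cC_{\gam,\gap}$, uniformly for $\arg\eta$ in compact subsets of $(\gam,\gap)$, so the integrand $\xi\ln(1+\psi(\eta)/q)/(\eta(\xi-\eta))$ is $O(\log|\eta|/|\eta|^{2})$. Consequently the integrals along $\{\Im\eta=\om_-\}$ and along any admissible $\cL^-_{\om^-_1,b^-,\om^-}$ converge absolutely (along $\cL^-_{\om^-_1,b^-,\om^-}$, $|\eta|$ grows exponentially in the real parameter, so the integrand decays exponentially there), and the integrals over the arcs $\{|\eta|=R\}$ joining the two contours are $O(\log R/R)\to0$ as $R\to\infty$. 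Now fix $\xi$ with $\Im\xi>\sg_-(q)$ and an admissible sinh-contour $\cL^-_{\om^-_1,b^-,\om^-}\subset i(\mum,\mup)+(\cC_{\gam,\gap}\cup\{0\})$ lying below $\xi$; choosing $\om_-\in(\sg_-(q),\Im\xi)$ and a homotopy from $\{\Im\eta=\om_-\}$ to $\cL^-_{\om^-_1,b^-,\om^-}$ that stays below $\xi$ and inside the region of the previous step, Cauchy's theorem (no residue is crossed, and the arcs at infinity vanish) turns \eqref{phip1} into \eqref{phipq_def}.

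Finally, for a fixed admissible $\cL^-_{\om^-_1,b^-,\om^-}$ the right-hand side of \eqref{phipq_def} depends on $\xi$ only through $\xi/(\eta(\xi-\eta))$ and, by the bounds above, converges locally uniformly for $\xi\in\bC\setminus\cL^-_{\om^-_1,b^-,\om^-}$, so it is analytic there; on the component containing $+i\infty$ it agrees with $\phipq$ wherever the latter was already defined, hence it extends $\phipq$ analytically to all of $\{\xi\ \text{above}\ \cL^-_{\om^-_1,b^-,\om^-}\}$. Letting the contour vary — it escapes to infinity in the directions $\om^-$ and $\pi-\om^-$, and its vertex $i(\om^-_1+b^-\sin\om^-)$ ranges over $i(\mum,\mup)$, so as $\om^-\downarrow\gam$ the regions above these contours exhaust $i(\mum,+\infty)+i(\cC_{\pi/2-\gam}\cup\{0\})$ — gives the claimed domain, and for a given $\xi$ in it \eqref{phipq_def} holds for every admissible $\cL^-$ below $\xi$, since any two such contours are homotopic in $\bC\setminus\{\xi\}$. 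I expect the second step to be the main obstacle: producing a single $\sg$ valid for all large $q$ and certifying that $\ln(1+\psi(\eta)/q)$ is genuinely single-valued and analytic on the \emph{whole} region swept by the deformation, not merely on $i\bR$ where \eqref{crucial} supplies it — this is precisely where SINH-regularity of $X$ is used; the remaining bookkeeping (removability of the singularity at $0$, the homotopies avoiding $\xi$, the description of the union of half-plane-like domains) needs care but is routine.
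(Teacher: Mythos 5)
The paper itself contains no proof of Lemma \ref{lem:WHF-SINH}: it is stated without argument, being imported from the authors' earlier work on sinh-acceleration \cite{SINHregular,Contrarian,EfficientLevyExtremum}. Your reconstruction --- start from \eq{phip1}--\eq{phim1}, fix $\sg$ so that $\ln(1+\psi(\eta)/q)$ is single-valued on the region swept by the deformation, verify the $O(\log|\eta|/|\eta|^{2})$ decay of the integrand and the vanishing of the connecting arcs, move the horizontal line onto the sinh-contour by Cauchy's theorem, and then obtain the analytic continuation in $\xi$ from the locally uniform convergence of the contour integral, letting the contour parameters vary so that the regions above (below) the admissible contours exhaust the stated domains --- is precisely the intended derivation, and it is correct in all essentials, including the removable singularity at $\eta=0$, the requirement that the homotopy stay below (above) $\xi$, and the reduction of (ii) to (i) via the dual process. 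The one point to press is the one you flag yourself: a single $\sg$ valid simultaneously for \emph{every} admissible contour requires control of $\arg(q+\psi(\eta))$ up to the boundary rays of $\cC_{\gam,\gap}$ and near $\Im\eta=\mum,\mup$, where the asymptotics \eq{asympsisRLPE} need not be uniform and $\psi$ need not be bounded; for a fixed contour your compact-subcone argument suffices (the directions lie in a compact subset of $(\gam,\gap)$ and the crossing point of the imaginary axis is interior), while for the statement as written one either shrinks the region slightly or interprets the logarithm via analytic continuation along the deformation --- exactly the caveat the paper itself concedes in Remark \ref{rem:SL-WHF}(a). With that reading, your proof is a faithful and complete version of what the paper leaves implicit.
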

\begin{rem}\label{rem:SL-WHF}{\rm 
\begin{enumerate}[(a)]
\item
In the process of deformation, the expression $1+\psi(\eta)/q$ may not assume value zero. In order to avoid complications stemming from analytic continuation to an appropriate Riemann surface, it is advisable to ensure that $1+\psi(\eta)/q\not\in(-\infty,0]$.
 Thus, if $X$ is a SL-process and $q>0$ - and only positive $q$'s are used in the Gaver-Stehfest method or GWR algorithm -  
any $\om^\pm\in (-\pi/2,\pi/2)$ is admissible in \eq{phipq_def} and \eq{phimq_def} provided $\om^\pm_1$ and $b^\pm$ are chosen so that
the contours are located as stated in Lemma \ref{lem:WHF-SINH}.
\item
For evaluation of certain expectations (prices), it is possible to use contours $\cL^\pm:=\cL^\pm_{\om^\pm_1,b^\pm,\om^\pm}$
with $\om^\pm$ of the same sign. For the iteration procedure in the present paper, it is crucial to use $\cL^+$ and $\cL^-$
in the upper and lower half-planes, respectively.  
If the sinh-acceleration is applied to the Bromwich integral as well,
then additional conditions on the process
and the parameters of the deformations must be imposed. See Lemma \ref{lem:cones_Brom} below.
\item
If $\cL^+_{\om^+_1,b^+,\om^+}$ and $\cL^-_{\om^-_1,b^-,\om^-}$ are in the upper and lower half-planes, respectively, 
and $\phi^\mp_q(\xi)$ for $\xi\in \cL^\pm_{\om^\pm_1,b^\pm,\om^\pm}$ are needed, then 
it is advantageous to calculate $\phi^\pm_q(\xi)$, and then
\beqa\label{eq:rec_phimq}
\phimq(\xi)&=&\frac{q}{\phipq(\xi)(q+\psi(\xi))},\ \xi\in \cL^+_{\om^+_1,b^+,\om^+},\\\label{eq:rec_phipq}
\phipq(\xi)&=&\frac{q}{\phimq(\xi)(q+\psi(\xi))},\ \xi\in \cL^-_{\om^-_1,b^-,\om^-}.
\eqa

\end{enumerate}
}
\end{rem}
The following lemma is a Lemma 4.1 in \cite{EfficientLevyExtremum} restricted to SINH-regular processes.
\begin{lem}\label{lem:cones_Brom}
Let 
$X$ be a SINH-regular process of order $\nu\in (0+2]\cup\{0+\}$ and type
 \\    $([\mum,\mup], \cC_{\gam,\gap},\cC_{\gampr,\gappr})$, where $\mum<0<\mup$ and $\gampr<0<\gappr$.
   Let either
     $\nu\ge 1$ or $\nu<1$ and the drift $\mu$ in \eq{eq:reprpsi} is 0.
     
     Then $\exists$ $\om_\ell\in (0,\pi/2)$, $c, \sg>0$ such that $\forall$\ 
     $q\in \sg+\cC_{\pi/2+\om_\ell}$ and  $\xi\in i(\mum,\mup)+(\cC_{\gampr,\gappr}\cup\{0\})$,
     \bbe\label{bound:two cones}
     |q+\psi(\xi)|\ge c(|q|+|\xi|^\nu).
     \ee
     \end{lem}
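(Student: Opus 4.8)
The plan is to prove the estimate \eqref{bound:two cones} by splitting the domain of $\xi$ into a bounded part and the complement, and by exploiting the SINH-regularity asymptotics \eqref{asympsisRLPE} together with condition (vi) of Definition \ref{def:SINH_reg_proc_1D0}. First I would fix the cone $\cC_{\gampr,\gappr}$ in which $\xi$ is allowed to range; by assumption $\gampr<0<\gappr$, and on the closed subcone $\overline{\cC_{\gampr',\gappr'}}$ (with $\gam<\gampr'<\gampr$, $\gappr<\gappr'<\gap$, say, chosen slightly larger so that the relevant estimates hold up to the boundary) the function $\varphi\mapsto c_\infty(\varphi)$ is continuous with $\Re c_\infty(\varphi)>0$, hence bounded below by some $c_0>0$ and bounded above. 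Therefore there exists $R>0$ such that for $|\xi|\ge R$ and $\xi\in i(\mum,\mup)+\cC_{\gampr,\gappr}$ one has $\Re\psi^0(\xi)\ge (c_0/2)|\xi|^\nu$ (for the $\nu=0+$ case, $|\xi|^\nu=\ln|\xi|$ and the same kind of bound holds for $|\xi|$ large). I would make this uniform over the strip shift $i(\mum,\mup)$ by noting the asymptotics are stable under bounded translations — this is where a small amount of care is needed, but it is routine since $\psi^0(\rho e^{i\varphi}+iy)/(\rho e^{i\varphi})^\nu\to c_\infty(\varphi)$ uniformly for $y$ in a compact interval.

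Next I would handle the drift term $-i\mu\xi$ in \eqref{eq:reprpsi}. If $\nu\ge 1$, then $|\mu\xi|\le C|\xi|\le C|\xi|^\nu$ for $|\xi|\ge 1$, so it is absorbed into the remainder by shrinking the constant slightly; if $\nu<1$, the hypothesis says $\mu=0$, so there is no drift term at all. This is precisely the reason the lemma's hypothesis bifurcates on $\nu\ge 1$ versus $\nu<1$. Thus for $|\xi|$ large and $\xi$ in the cone, $\Re\psi(\xi)\ge (c_0/4)|\xi|^\nu$, say, which already gives $|\psi(\xi)|\ge c'|\xi|^\nu$ there. Now add $q$: writing $q=\sg+q'$ with $q'\in\cC_{\pi/2+\om_\ell}$, I would choose $\om_\ell\in(0,\pi/2)$ small enough that $\Re q'\ge -\sin(\om_\ell)|q'|$ is controlled, i.e. $\Re q\ge \sg-\sin(\om_\ell)|q'|$; more precisely, since $q'\in\cC_{\pi/2+\om_\ell}$ means $\arg q'\in(-\pi/2-\om_\ell,\pi/2+\om_\ell)$, we have $\Re q'\ge-|q'|\sin\om_\ell$. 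Then $\Re(q+\psi(\xi))\ge \sg-|q'|\sin\om_\ell+(c_0/4)|\xi|^\nu$. Both $q+\psi(\xi)$ terms contribute with positive real part once $\sg$ dominates; combining $|q+\psi(\xi)|\ge\Re(q+\psi(\xi))$ on the part where the real part is positive, and using that on a cone of half-angle $<\pi/2+\om_\ell$ one controls $|q|$ by a multiple of $\Re q$ shifted, one extracts the bound $|q+\psi(\xi)|\ge c(|q|+|\xi|^\nu)$ for $|\xi|\ge R$.

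For the bounded part $|\xi|<R$ (still $\xi\in i(\mum,\mup)+(\cC_{\gampr,\gappr}\cup\{0\})$), I would argue by a compactness/continuity argument: $\psi$ is analytic, hence continuous, on this closed bounded region, and for $X$ a SINH-regular process (and in particular for SL-processes) the equation $q+\psi(\xi)=0$ has no solution in the relevant region when $q>0$; more to the point, since here we only need the estimate for $q$ in a shifted cone with $\Re q$ large, we simply use that $|\psi(\xi)|$ is bounded on the compact set, say by $M$, so that $|q+\psi(\xi)|\ge|q|-M\ge|q|/2$ once $|q|\ge 2M$, i.e. once $\sg\ge 2M+$ (something), while simultaneously $|q|/2\ge c(|q|+R^\nu)$ for a suitable small $c$ because $|q|$ is large. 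Taking $\sg$ to be the maximum of the thresholds arising from the two regimes, and $\om_\ell$, $c$ the corresponding constants, finishes the proof. The main obstacle — or rather the only genuinely delicate point — is establishing the asymptotic lower bound $\Re\psi^0(\xi)\gtrsim|\xi|^\nu$ \emph{uniformly} over the union of translates $i(\mum,\mup)+\cC_{\gampr,\gappr}$ rather than on a single ray; everything else is bookkeeping with the geometry of the cone $\cC_{\pi/2+\om_\ell}$ and the elementary inequality $|z|\ge\Re z$. Since this lemma is quoted as a restriction of Lemma 4.1 of \cite{EfficientLevyExtremum} to SINH-regular processes, one may alternatively simply invoke that result after checking the SINH-regular hypotheses imply its hypotheses, but I would prefer to give the self-contained two-region argument above.
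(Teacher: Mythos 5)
The paper itself gives no proof of this lemma: it is stated as ``Lemma 4.1 of \cite{EfficientLevyExtremum} restricted to SINH-regular processes'', so the fallback you mention at the end (invoke that result) is precisely what the authors do. Judged as a self-contained argument, your sketch breaks down at the two places where the lemma is genuinely delicate. First, the final step cannot be carried out with real parts alone. The set $\sg+\cC_{\pi/2+\om_\ell}$ contains $q$ that are nearly purely imaginary and even $q$ with $\Re q\to-\infty$ (these are exactly the wings of the sinh-deformed Bromwich contour, which is why the lemma is stated for a cone of half-angle $\pi/2+\om_\ell$ rather than a half-plane). For such $q$, $\Re q$ does not control $|q|$, so from $\Re(q+\psi(\xi))\ge \sg-|q'|\sin\om_\ell+(c_0/4)|\xi|^\nu$ together with $|z|\ge\Re z$ you cannot extract \eq{bound:two cones}; when the term $|q'|\sin\om_\ell$ dominates, your lower bound is not even positive. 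What is needed is an angular-separation argument: modulo bounded terms, $\psi(\xi)$ stays in a sector $\cC_{\ga''}$ with $\ga''<\pi/2$ strictly, $-q$ stays in $-\sg-\cC_{\pi/2+\om_\ell}$, and choosing $\om_\ell<\pi/2-\ga''$ makes the two sectors transversal, which yields $|q+\psi(\xi)|\ge c(|q|+|\psi(\xi)|)$; your proposal never performs this step, and it is the heart of the lemma.

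Second, the uniform lower bound $\Re\psi^0(\xi)\ge (c_0/2)|\xi|^\nu$ on the whole cone, which you yourself flag as the only delicate point, is assumed rather than proved, and your route to it is not available: condition (vi) of Definition \ref{def:SINH_reg_proc_1D0} gives $\Re c_\infty(\varphi)>0$ only on the open interval $(\gampr,\gappr)$, not on a strictly larger closed interval $[\gampr',\gappr']$ as you assert, and the infimum over the open interval can be $0$ (e.g.\ KoBoL/CGMY of order $\nu$, where $\Re c_\infty(\varphi)\propto\cos(\nu\varphi)$ vanishes at the maximal admissible $\gappr=\pi/(2\nu)$); also, pointwise asymptotics \eq{asympsisRLPE} plus continuity of $c_\infty$ do not by themselves give uniformity in $\varphi$ or in the shift $iy$, $y\in(\mum,\mup)$. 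Relatedly, absorbing the drift for $\nu\ge1$ ``by shrinking the constant slightly'' is incorrect at $\nu=1$: there $|\mu||\xi|$ is of the same order as $|\xi|^\nu$, and if $|\mu|$ is not small relative to the lower bound of $\Re c_\infty$ one can have $\Re\psi(\xi)\le-\de|\xi|$ for $\xi$ near the edge of the cone, so the admissible $\om_\ell$ (and in effect the admissible aperture in $\xi$) must be chosen taking $\mu$ into account; the dichotomy $\nu\ge1$ versus $\nu<1,\ \mu=0$ is not just the inequality $|\xi|\le|\xi|^\nu$. A correct self-contained proof has to combine a uniform sector estimate for $\psi$ on closed subcones with the transversality argument above, or else one should simply cite \cite{EfficientLevyExtremum} as the paper does.
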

     As in \cite{EfficientLevyExtremum},
       the bound \eq{bound:two cones}
 allows us to use
     one sinh-deformed contour in the lower half-plane and  one in the upper half-plane for all purposes: the calculation 
     of the Wiener-Hopf factors and evaluation of the integrals in the pricing formulas. If either $\mum=0$ or $\mup=0$, then both contours
     must cross $i\bR$ in the same half-plane but the types of contours (two non-intersecting contours, one with the wings deformed upwards, the other one with the wings deformed downwards) remain the same as in the case $\mum<0<\mup$. To simplify 
     the exposition of the main idea, we assume that $\mum<0<\mup$
     
We deform the contour $\{\Re q=\sg\}$ in the Bromwich integral
into a contour of the form $\cL^L=\chi_{L; \sg_\ell, b_\ell, \om_\ell}(\bR)$, where the conformal mapping
$\chi_{L \sg_\ell, b_\ell, \om_\ell}$ is defined by
 \bbe\label{eq:sinhLapl}
\chi_{L; \sg_\ell,b_\ell,\om_\ell}(y)=\sg_\ell +i b_\ell\sinh(i\om_\ell+y),
\ee
and $\sg_\ell, b_\ell>0$, $\om_\ell\in (0,\pi/2)$,  $\sg_\ell-b_\ell\sin\om_\ell>0$. For $q\in \cL^L$ (and $q$'s arising in the process
of deformation of the line $\{\Re q=\sg\}$ into $\cL^L$), we can calculate $\phipq(\xi), \xi\in \cL^-,$ using  \eq{phipq_def},
and $\phimq(\xi), \xi\in \cL^+,$ using  \eq{phimq_def}.

\begin{rem}\label{rem:recip_phipm}{\rm
If \eq{bound:two cones} holds, then, for $q\in \cL^L$,  we can use \eq{eq:rec_phimq}-\eq{eq:rec_phipq}.
}
\end{rem}

\begin{rem}\label{rem:nuless1GWR}{\rm
If $\nu<1$ and $\mu\neq 0$, then the sinh-acceleration cannot be applied to the Bromwich integral if the deformations
$\cL^\pm$ of the contours of integration is used. See \cite{EfficientLevyExtremum}. In this case, only the GWR acceleration
or an acceleration of Euler type  can be used.
}
\end{rem}

  \section{Double barrier options and joint cpdf of $X_T, \barX_T$ and $\uX_T$}\label{s:double}
  \subsection{General scheme in the state space}\label{ss:gen_scheme_state} Let $X$ be a L\'evy process on $\bR$.
  For $h\in \bR$, denote by $\tau^+_h$ and $\tau^-_h$ the first entrance time of $X$ into $[h,+\infty)$ and
 $(-\infty, h]$, respectively.    Let $T>0$ and $h_-<x<h_+$, and let $G\in L_\infty((h_-,h_+))$. 
   Consider 
 \bbe\label{Vdoublent1}
 V(G;h_-,h_+;T,x)=\bE^x[\bfo_{\tau^-_{h_+}\wedge \tau^+_{h_+}>T}G(X_T)].
 \ee 
 Let $q>0$ and let $T_q$ be an exponentially distributed 
 random variable of mean $1/q$, independent of $X$. Assume that $X_0=\barX_0=\uX_0=0$.
 Then,
 for $x\in (h_-,h_+)$, the Laplace transform $\tV(G; h_-,h_+;q,x)$ of $V(G; h_-,h_+;T,x)$ can be represented as 
 \bbe\label{eq:repr1}
 \tV(G;h_-,h_+;q,x)=q^{-1}\bE^x[\bfo_{\tau^-_{h_+}\wedge \tau^+_{h_+}>T_q}G(X_{T_q})].
 \ee
 Let $lG$ be a bounded measurable extension of $G$ to $\bR$. 
Set $\tV^0(lG; q,\cdot)=q^{-1}\cEq lG$, and consider  $\tV^1(lG;h_-,h_+;q,x):=\tV(G;h_-,h_+;q,x)-\tV^0(lG; q,x)$.
We calculate $\tV^1(lG;h_-,h_+;q,x)$ in the form of a series exponentially converging in $L_\infty$-norm. 
The terms of the series depend on the choice of the extension but $\tV(G;h_-,h_+;q,x)$ is independent of
the choice.  
Define
 \beqa\label{def:tVp1}
\tV^+_1(lG;h_-,h_+;q,x)=\bE^x[e^{-q\tau^+_{h_+}}\tV^0(lG;q, X_{\tau^+_{h_+}})],\
\\
\label{def:tVm1}
\tV^-_1(lG;h_-,h_+;q,x)=\bE^x[e^{-q\tau^-_{h_-}}\tV^0(lG;q, X_{\tau^-_{h_-}})],
\eqa
and note that  $\tV^+_1$ (resp., $\tV^-_1$) is the  EPV of the stream $G(X_t)$ which starts to accrue the first time $X_t$ crosses $h_+$ from below
(crosses $h_-$ from above). Inductively, for $j=2,3,\ldots,$ define
\beqa\label{def:tVpj}
\tV^+_j(lG;h_-,h_+;q,x)=\bE^x[e^{-q\tau^+_{h_+}}\tV^-_{j-1}(lG;h_-,h_+;q, X_{\tau^+_{h_+}})],\
\\
\label{def:tVmj}
\tV^-_j(lG;h_-,h_+;q;x)=\bE^x[e^{-q\tau^-_{h_-}}\tV^+_{j-1}(lG;h_-,h_+;q, X_{\tau^-_{h_-}})].
\eqa
In \cite{BLdouble}, the following key theorem is derived from the stochastic continuity of $X$:
\begin{thm}\label{thm:convergence}
For any $\sg>0$ and $h_-<h_+$, there exist $\de_\pm=\de_\pm(\sg, h_+-h_-)\in (0,1)$ such that for all 
$q\ge \sg$, $lG\in L_\infty(\bR)$ and $j=2,3,\ldots$, 
\beqa\label{eq:boundtVp}
\sup_{x\ge h_+}|\tV^+_j(lG;h_-,h_+;q;x)|&\le&\de_+\sup_{x\le h_-}|\tV^-_{j-1}(lG;h_-,h_+;q,x)|,\\
\label{eq:boundtVm}
\sup_{x\le h_-}|\tV^-_j(lG;h_-,h_+;q;x)|&\le&\de_- \sup_{x\ge h_+}|\tV^+_{j-1}(lG;h_-,h_+;q,x)|,
\eqa
and
\beqa\label{qtV}
 \tV^1(lG;h_-,h_+;q,x)&=&\sum_{j=1}^{+\infty} (-1)^j (\tV^+_j(lG;h_-,h_+;q,x)+\tV^-_j(lG;h_-,h_+;q,x)).
 \eqa
 The series on the RHS of \eq{qtV} exponentially converges in $L_\infty$-norm.

 
\end{thm}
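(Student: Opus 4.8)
The plan is to prove \eq{qtV} by ``unfolding'' $\tV^1$ at the first time $X$ leaves the corridor $(h_-,h_+)$ through repeated use of the strong Markov property, and to read off the contraction bounds \eq{eq:boundtVp}--\eq{eq:boundtVm} from one scalar estimate on the Laplace transform of a first--passage time. Write $\tau_0=\tau^+_{h_+}\wedge\tau^-_{h_-}$. Since $T_q>0$ a.s., $lG$ agrees with $G$ on $(h_-,h_+)$, and $\bP^x(\tau_0=T_q)=0$, the memorylessness of $T_q$ together with the strong Markov property at $\tau_0$ turns \eq{eq:repr1} into
\bbe\label{plan:U}
\tV^1(lG;h_-,h_+;q,x)=-q^{-1}\bE^x[\bfo_{\tau_0\le T_q}lG(X_{T_q})]=-\bE^x\!\left[e^{-q\tau_0}\tV^0(lG;q,X_{\tau_0})\right]=:-U(x),
\ee
so it suffices to expand $U$. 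For bounded measurable $u$ put $\cP^\pm u(x)=\bE^x[e^{-q\tau^\pm_{h_\pm}}u(X_{\tau^\pm_{h_\pm}})]$, so that, by \eq{def:tVp1}--\eq{def:tVmj}, $\tV^\pm_1=\cP^\pm\tV^0(lG;q,\cdot)$ and $\tV^\pm_j=\cP^\pm\tV^\mp_{j-1}$ for $j\ge2$.

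Splitting the path at $\tau_0$ according to the edge of $(h_-,h_+)$ through which $X$ leaves, and using that on the downward--exit event $\tau^+_{h_+}=\tau_0+\tau^+_{h_+}\circ\theta_{\tau_0}$ (before $\tau_0$ the process has entered neither $[h_+,+\infty)$ nor $(-\infty,h_-]$), the strong Markov property yields, for bounded measurable $u$,
\bbe\label{plan:onestep}
\bE^x[\bfo_{X_{\tau_0}\ge h_+}e^{-q\tau_0}u(X_{\tau_0})]=\cP^+u(x)-\bE^x[\bfo_{X_{\tau_0}\le h_-}e^{-q\tau_0}\cP^+u(X_{\tau_0})]
\ee
together with the mirror identity obtained by interchanging the two barriers. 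Splitting $U(x)$ into its contributions from $\{X_{\tau_0}\ge h_+\}$ and $\{X_{\tau_0}\le h_-\}$ and applying \eq{plan:onestep} and its mirror with $u=\tV^0(lG;q,\cdot)$ gives $U=\tV^+_1+\tV^-_1-R^{(1)}$, where
\bbe\label{plan:Rj}
R^{(j)}(x):=\bE^x\!\left[e^{-q\tau_0}\bigl(\bfo_{X_{\tau_0}\le h_-}\tV^+_j(X_{\tau_0})+\bfo_{X_{\tau_0}\ge h_+}\tV^-_j(X_{\tau_0})\bigr)\right].
\ee
Applying \eq{plan:onestep} and its mirror once more, now to the two pieces of $R^{(j)}$, gives $R^{(j)}=\tV^+_{j+1}+\tV^-_{j+1}-R^{(j+1)}$, so by induction
\bbe\label{plan:partial}
\tV^1(lG;h_-,h_+;q,x)=\sum_{j=1}^{N}(-1)^j\bigl(\tV^+_j(x)+\tV^-_j(x)\bigr)+(-1)^{N+1}R^{(N)}(x).
\ee

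It remains to establish \eq{eq:boundtVp}--\eq{eq:boundtVm} and that $\|R^{(N)}\|_{L_\infty}\to0$. For $x\ge h_+$ one has $\tau^+_{h_+}=0$, hence $\tV^+_j(x)=\tV^-_{j-1}(x)=\bE^x[e^{-q\tau^-_{h_-}}\Phi(X_{\tau^-_{h_-}})]$, where $\Phi=\tV^0(lG;q,\cdot)$ if $j=2$ and $\Phi=\tV^+_{j-2}$ if $j\ge3$; since $\tV^-_{j-1}(y)=\Phi(y)$ for $y\le h_-$ and $X_{\tau^-_{h_-}}\le h_-$,
\bbe\label{plan:bound}
\sup_{x\ge h_+}|\tV^+_j(x)|\le\Bigl(\sup_{x\ge h_+}\bE^x[e^{-q\tau^-_{h_-}}]\Bigr)\sup_{y\le h_-}|\tV^-_{j-1}(y)|.
\ee
By spatial homogeneity, monotonicity of $\bE^0[e^{-q\tau^-_c}]$ in $c$, and monotonicity in $q$, the prefactor is at most $\bE^0[e^{-\sg\tau^-_{-(h_+-h_-)}}]=:\de_+$ for every $q\ge\sg$, and $\de_+<1$ because right--continuity of the paths of $X$ (stochastic continuity) forces $\bP^0(\tau^-_{-(h_+-h_-)}>0)=1$; this is \eq{eq:boundtVp}, and \eq{eq:boundtVm} follows symmetrically with $\de_-=\bE^0[e^{-\sg\tau^+_{h_+-h_-}}]<1$. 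Iterating \eq{eq:boundtVp}--\eq{eq:boundtVm} makes $\sup_{x\ge h_+}|\tV^\pm_j|$ and $\sup_{x\le h_-}|\tV^\pm_j|$ of order $(\de_+\de_-)^{j/2}$; since off its barrier each $\tV^\pm_j$ is a discounted average of its own boundary values, $\|\tV^\pm_j\|_{L_\infty(\bR)}$ equals the corresponding boundary supremum, and, by \eq{plan:Rj}, $\|R^{(N)}\|_{L_\infty(\bR)}\le\max\bigl(\sup_{y\le h_-}|\tV^+_N|,\ \sup_{y\ge h_+}|\tV^-_N|\bigr)$, so the series in \eq{qtV} converges absolutely in $L_\infty(\bR)$ at a geometric rate and, by \eq{plan:partial}, sums to $\tV^1$.

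The step I expect to be the main obstacle is the unfolding itself: one has to verify the path decomposition $\tau^+_{h_+}=\tau_0+\tau^+_{h_+}\circ\theta_{\tau_0}$ on the downward--exit event (and its mirror), keep the alternating signs straight so that the remainders telescope exactly as in \eq{plan:partial}, and check that the tie events $\{\tau^+_{h_+}=\tau^-_{h_-}\}$ and $\{X_{\tau_0}\in\{h_-,h_+\}\}$ are $\bP^x$--negligible for $x\in(h_-,h_+)$, so that \eq{plan:onestep} is unaffected. Everything else is the strong Markov property of $X$ and the single scalar estimate behind \eq{plan:bound}.
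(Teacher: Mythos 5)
Your proof is correct. Note that the paper does not prove Theorem \ref{thm:convergence} itself: it quotes it from \cite{BLdouble}, where it is derived from the stochastic continuity of $X$ by essentially the argument you reconstruct — strong-Markov unfolding of $\tV^1$ at the successive exit times from $(h_-,h_+)$, with contraction factors $\de_\pm=\bE^0\bigl[e^{-\sg\tau^\mp_{\mp(h_+-h_-)}}\bigr]<1$, the strict inequality being exactly where right-continuity of paths at $0$ (stochastic continuity) enters. Your observation that $\tV^+_j=\tV^-_{j-1}$ on $[h_+,+\infty)$ (and its mirror), which both upgrades the crude estimate to the stated bounds \eq{eq:boundtVp}--\eq{eq:boundtVm} and shows that $\|\tV^\pm_j\|_{L_\infty(\bR)}$ equals the corresponding boundary supremum, together with the null tie events and the identity $\tau^+_{h_+}=\tau_0+\tau^+_{h_+}\circ\theta_{\tau_0}$ on the downward-exit event, are precisely the details needed to make the telescoping in \eq{qtV} and its exponential $L_\infty$-convergence rigorous.
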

Under additional weak conditions on $X$, Theorems 11.1.4 and 11.1.5 in \cite{IDUU} state that 
\beqa\label{tVhm1}
\tV^-_1(lG; h_-,h_+; q,x)&=&q^{-1}(\cEmq\bfo_{(-\infty,h_-]}\cEpq)lG)(x),\\
\label{tVhp1}
\tV^+_1(lG; h_-,h_+; q,x)&=&q^{-1}(\cEpq\bfo_{[h_+,+\infty)]}\cEmq)lG)(x);
\eqa
in \cite{single}, \eq{tVhm1}-\eq{tVhp1} are proved for any L\'evy process.
Similar representations for $\tV^\mp_j$, $j=2,3,\ldots$:
\beqa\label{tVhpj2}
\tV^+_j(lG; h_-,h_+; q,x)&=&\cE^+_q\bfo_{[h_+,+\infty)}(x)(\cE^+_q)^{-1}\tV^-_{j-1}(lG;h_-,h_+; q,x),
\\\label{tVhmj2}
\tV^-_j(lG;h_-,h_+; q,x)&=&(\cE^-_q\bfo_{(-\infty, h_-]}(\cE^-_q)^{-1}\tV^+_{j-1})(lG;h_-,h_+; q,x),
\eqa
follow from Theorems 11.1.6 and 11.1.7 in \cite{IDUU}
provided we can prove that $\tV^\pm_{j-1}$, $j=2,3,\ldots,$ admit the representations
\beqa\label{rprtVpj}
\tV^+_{j-1}(lG; h_-,h_+; q,x)&=&q^{-1}(\cEq G^+_{j-1})(x), \ x\ge h_+,\\
\label{rprtVmj}
\tV^-_{j-1}(lG; h_-,h_+; q,x)&=&q^{-1}(\cEq G^-_{j-1})(x), \ x\ge h_-,
\eqa
where $G^\mp_{j-1}\in L_\infty(\bR)$.
We consider L\'evy processes with the characteristic exponents of class $C^\infty$ and regular asymptotic behavior at infinity. In this case, it follows from the general regularity results for solutions of boundary problems for pdo (see \cite{eskin}
for a general theory, 
and \cite{NG-MBS} for a simpler direct proof in the one-dimensional case) that
$(q-L_X)V^+_{j-1}=q(\cEq)^{-1}V^+_{j-1}$,
is infinitely differentiable on $(h_-,+\infty)$ and all derivatives exponentially decay at infinity. Hence, the representation
\eq{rprtVpj} exists. By symmetry, the representation
\eq{rprtVmj} exists as well.

\begin{rem}\label{rem:convergence}{\rm 
\begin{enumerate}[(a)]
\item
  For a numerical realization, the series on the RHS of \eq{qtV} are truncated,
and $\sum_{j=1}^\infty$ replaced with $\sum_{j=1}^{M_0}$; given the error tolerance,
 $M_0$ can be chosen using the bounds in Theorem \ref{thm:convergence}.

\item If $G$ is given by an analytical expression that defines an unbounded function on $\bR$ but bounded on $[h_-,h_+]$, we can use the scheme above replacing $G$ with a function $lG$ of class $L_\infty(\bR)$, which coincides with $G$ on $[h_-,h_+]$. This simple consideration suffices for a numerical realization  in the state space.
If the numerical realization is in the dual space, complex-analytical properties of the Fourier transform $\hG$ are crucial.
If $\hG$ has good properties as in the case of the call option, then a good replacement  of $G$   with a bounded function
requires additional work. Instead, we will not replace $G$ but assume that the strip of analyticity of $\psi$ is sufficiently wide, and
$q>0$ is sufficiently large so that the series \eq{qtV} converges in a space with an appropriate exponential weight. 

\item Alternatively, in the case of a call option, one can make an appropriate Esscher transform and reduce to the case of 
a bounded $G$.

\item
The inverse Laplace transform of $\tV^0(q,x)=q^{-1}\cEq G(x)$ is the price $V_{\mathrm{euro}}(G; T,x)$ of the European option
 with the payoff $G(x+X_T)$ at maturity $T$. An explicit procedure for an efficient numerical evaluation of $V_{\mathrm{euro}}(G; T,x)$
 can be found in \cite{SINHregular}. In the paper, we design an efficient numerical procedure for the evaluation of 
 $V^1(G; h_-,h_+;T,x)= V(G; h_-,h_+;T,x)-V_{\mathrm{euro}}(G; T,x)$. Note that  $V^1(G; h_-,h_+;T,x)$ is the inverse Laplace transform of the series
 on the RHS of \eq{qtV}.
 \item
 To shorten the notation, below, we suppress the dependence of $\tV^0, \tV^1$ and $\tV^\pm_j$ on $lG$.
 \end{enumerate}
}
\end{rem}
\subsection{Calculations in the dual space}\label{ss:calc_dual}
We make the calculations in the dual space, with the exception of the last step, when the inverse Fourier transform is applied to evaluate the final result. Also, we assume that $X$ is SL-regular.   Then we can choose the contours $\cL^\pm$  
and evaluate the Wiener-Hopf factors $\phi^\pm_q(\xi)$ exactly as in \cite{EfficientLevyExtremum}. In the case $q>0$,
the contours are of the form $\cL^\pm=\chi_{\om^\pm_1,b^\pm,\om^\pm}(\bR)$, where $\pm\om^\pm\in (0,\pi/2)$, and $b^\pm>0$ and $\om^\pm_1$ are
chosen so that $\pm(\om^\pm_1+b^\pm\sin (\om^\pm))>0$, and $i(\om^\pm_1+b^\pm\sin (\om^\pm))$ are on the open interval  on the imaginary axis around zero, where $q+\psi(\xi)>0$.
If the sinh-acceleration is applied to  the Bromwich integral, then the choice of the parameters of the deformation is more involved.
In all cases, we use the same deformations and contours $\cL^\pm$ as in \cite{EfficientLevyExtremum}.
The detailed recommendations are the same. In the numerical scheme below, we calculate the Fourier transforms $\htV^+_1(h_-,h_+;q,\xi)$
on $\cL^-$, and the Fourier transforms $\htV^-_1(h_-,h_+;q,\xi)$
on $\cL^+$, and then, by induction, all $\htV^\pm_j(h_-,h_+;q,\xi)$
on $\cL^\mp$. Only the first step is payoff-dependent.

\subsubsection{The first step}\label{sss:first_step}
 Analytic continuation of $\tV^0(q;\xi)$ and $\htV^\pm_1(h_-,h_+;q,\xi)$ to sufficiently large domains is possible under additional conditions
on $G$. We consider three basic cases.
\begin{enumerate}[(a)]
\item
{\em Double no-touch option} $V_{n.t.}(h_-,h_+;T,x)=V(1;h_-,h_+;T,x)$. We have  $\tV^0(q,x)=1/q$, 
\bbe\label{htV1p}
\htV^\pm_1(h_-,h_+;q,\xi)=\pm e^{-ih_\pm\xi}\frac{\phi^\pm_q(\xi)}{i\xi q}, \ \xi\in \cL^\mp.
\ee
Define $\hW^\pm_1(h_-,h_+,q,\xi)=\pm 1/(i\xi)=\mp i/\xi$, $\xi\in \cL^\mp$.
\item
{\em Digital no-touch option, equivalently, the joint cpdf of} $X_T, \barX_T, \uX_T$. For $a\in [h_+,h_-]$,
$G(x)=G(a;x)=\bfo_{(-\infty,a]}(x)$. Under our assumptions on $X$, $\tV(G(h_-;\cdot);h_-,h_+;q,x)=0$, and 
$\tV(G(h_+;\cdot);h_-,h_+;q,x)=V_{n.t.}(h_-,h_+;T,x)$, $x\in (h_-,h_+)$. Hence, it suffices to consider $a\in (h_-,h_+)$.
 Using the operator form of the Wiener-Hopf factorization, we simplify \eq{tVhp1} and \eq{tVhm1}:
 \beqa\label{jointtV1p2}
\tV^+_1(h_-,h_+;q,x) &=&q^{-1}(\cEpq \bfo_{[h_+,+\infty)}\cEmq \bfo_{(-\infty,a)})(x),
\\\label{jointtV1m2}
 \tV^-_1(h_-,h_+;q,x) &=&q^{-1}(\cEmq \bfo_{(-\infty,h_-]}\cEpq \bfo_{(-\infty,a)})(x)\\\nonumber
 &=&
 q^{-1}(\cEmq \bfo_{(-\infty, h_-]})(x)-q^{-1}(\cEmq \bfo_{(-\infty,h_-]}\cEpq \bfo_{[a,+\infty)})(x).
 \eqa
For $\xi\in \cL^-$,
  \beqast
  \htV^+_1(h_-,h_+; q,\xi)&=&q^{-1}\phipq(\xi)\cF_{x\to \xi}\bfo_{[h_+,+\infty)}(\cEmq \bfo_{(-\infty,a)})(x)
  \\
  &=& q^{-1}\phipq(\xi)\int_{h_+}^{+\infty}dy\, e^{-iy\xi}
  \frac{1}{2\pi}\int_{\cL^+}e^{i(y-a)\eta}\frac{\phi^-(q,\eta)}{-i\eta}d\eta\\
  &=&-\frac{\phipq(\xi)}{2\pi q }\int_{\cL^+}\frac{e^{-ih_+(\xi-\eta)-ia\eta}}{\eta-\xi}
  \frac{\phi^-(q,\eta)}{\eta}d\eta
   \eqast
  (we can apply Fubini's theorem because there exists $c>0$ such that $\Im(\eta-\xi)>c|\eta|$ for $\eta\in \cL^+$
  and $\xi\in \cL^-$). Simplifying,
  \bbe\label{htV1pjoint}
  \htV^+_1(h_-,h_+; q,\xi)=-\frac{\phipq(\xi)e^{-ih_+\xi}}{2\pi q }\int_{\cL^+}\frac{e^{i(h_+-a)\eta}}{\eta-\xi}
    \frac{\phi^-(q,\eta)}{\eta }d\eta,\ \xi\in \cL^-.
  \ee
 Similarly, 
\bbe\label{htV1mjoint}
\htV^-_1(h_-,h_+;q,\xi)=e^{-ih_-\xi}\frac{\phimq(\xi)}{-i\xi q}+
\frac{\phimq(\xi)e^{-ih_-\xi}}{2\pi q }\int_{\cL^-}\frac{e^{i(h_--a)\eta}}{\eta-\xi}
    \frac{\phi^+(q,\eta)}{\eta }d\eta,\ \xi\in\cL^+.
\ee
 Define
  \beqa\label{hWp1joint}
  \hW^+_1(h_-,h_+;q,\xi)&=&-\frac{1}{2\pi }\int_{\cL^+}\frac{e^{i(h_+-a)\eta}}{\eta-\xi}
    \frac{\phi^-(q,\eta)}{\eta}d\eta,\ \xi\in\cL^-,\\
 \label{hWm1joint}
  \hW^-_1(h_-,h_+;q,\xi)&=&-\frac{1}{i\xi}+\frac{1}{2\pi}\int_{\cL^-}\frac{e^{i(h_--a)\eta}}{\eta-\xi}
    \frac{\phi^+(q,\eta)}{\eta}d\eta,\ \xi\in\cL^+.
  \eqa
  
  \item {\em Double-touch call option}, with strike $K=e^a$, $h_-<a<h_+$. We have $G(a,x)=(e^x-e^a)_+$. 
  Using the operator form of the Wiener-Hopf factorization, we simplify \eq{tVhp1} and \eq{tVhm1}:
 \beqast\\\label{jointtV1mCall}
 \tV^-_1(h_-,h_+;q,x) &=&q^{-1}\cEmq \bfo_{(-\infty,h_-]}\cEpq G(a,x)\\
\tV^+_1(h_-,h_+;q,x) &=&q^{-1}\cEpq \bfo_{[h_+,+\infty)}\cEmq G(a,x)\\
&=&q^{-1}\cEpq \bfo_{[h_+,+\infty)}(\phimq(-i)e^x-e^a)+q^{-1}\cEpq \bfo_{[h_+,+\infty)}\cEmq (e^a-e^x)_+,
 \eqast
 and calculate, for $\xi\in \cL^+$,
 \beqast
\htV^-_1(h_-,h_+;q,\xi) &=&-q^{-1}\phimq(\xi)\int_{-\infty}^{h_-}e^{-iy\xi}\frac{1}{2\pi}\int_{\cL^-} e^{iy\eta}
\phipq(\eta)\frac{e^{(1-i\eta)a}}{\eta(\eta+i)}d\eta
\\
&=&q^{-1}\phimq(\xi)e^{-ih_-\xi}\frac{i e^a}{2\pi }\int_{\cL^-}\frac{e^{i(h_--a)\eta}\phipq(\eta)}{(\eta-\xi)\eta(\eta+i)}d\eta,\
\eqast
and for $\xi\in \cL^-$,
\[
\tV^+_1(h_-,h_+;q,x) =q^{-1}\phipq(\xi)e^{-ih_+\xi}\left(\phimq(-i)\frac{e^{h_+}}{i\xi-1}-\frac{e^a}{i\xi}-\frac{i e^a}{2\pi }\int_{\cL^+}\frac{e^{i(h_+-a)\eta}\phimq(\eta)}{(\eta-\xi)\eta(\eta+i)}d\eta\right).
\]
   \end{enumerate}
  Introduce 
   \beqa\label{hW1pcall}
   \hW^+_1(h_-,h_+,q,\xi)&=&\frac{\phimq(-i)e^{h_+}}{i\xi-1}-\frac{e^a}{i\xi}
   -\frac{i e^a}{2\pi }\int_{\cL^+}\frac{e^{i(h_+-a)\eta}\phimq(\eta)}{(\eta-\xi)\eta(\eta+i)}d\eta,
   \ \xi\in\cL^-,
   \\\label{hW1mcall}
   \hW^-_1(h_-,h_+,q,\xi)&=& \frac{i e^a}{2\pi}\int_{\cL^-}\frac{e^{i(h_--a)\eta}\phipq(\eta)}{(\eta-\xi)\eta(\eta+i)}d\eta,\ \xi\in\cL^+.
   \eqa
For $j=1,2,\ldots, $ define
   \bbe\label{def:hWj}
  \hW_j^\pm(h_-,h_+; q,\xi)=qe^{ih_\pm\xi}\phi^\pm(q,\xi)^{-1}\htV_j^\pm(h_-,h_+;q,\xi), \ \xi\in\cL^\mp,
  \ee
and note that
  \bbe\label{tVtW}
  \htV_j^\pm(h_-,h_+;q,\xi)=q^{-1}e^{-ih_\pm\xi}\phi^\pm(q,\xi)\hW^\pm_j(h_-,h_+; q,\xi), \ \xi\in\cL^\mp.
  \ee
  In cases (a)-(c) above, $\htV_1^\pm(h_-,h_+;q,\xi)$ and $\hW^\pm_1(h_-,h_+;q,\xi)$ are calculated explicitly, and \eq{def:hWj}-\eq{tVtW} hold; similarly,
 $\htV_1^\pm(h_-,h_+;q,\xi)$ and $\hW^\pm_1(h_-,h_+;q,\xi)$ can be calculated for options of other types.

 \subsubsection{Main block: the iteration procedure}\label{sss: the iteration}
 For $j=2,3,\ldots,$ $\hW^\pm_j$ are calculated inductively. 
 For $\xi\in \cL^-$, we have
  \beqast
  \hW^+_j(h_-,h_+; q,\xi)&=&qe^{ih_+\xi}\cF_{x\to \xi}\bfo_{[h_+,+\infty)}(\cE^+_q)^{-1}\tV^-_{j-1}(h_-,h_+;q,x)
  \\
  &=& e^{ih_+\xi}\int_{h_+}^{+\infty}dy\, e^{-iy\xi}
  \frac{1}{2\pi}\int_{\cL^+}e^{i(y-h_-)\eta}\frac{\phi^-(q,\eta)}{\phi^+(q,\eta)}\hW^-_{j-1}(h_-,h_+;q,\eta)d\eta\\
  &=&-\frac{e^{ih_+\xi}}{2\pi i}\int_{\cL^+}\frac{e^{-ih_+(\xi-\eta)-ih_-\eta}}{\eta-\xi}
  \frac{\phi^-(q,\eta)}{\phi^+(q,\eta)}\hW^-_{j-1}(h_-,h_+;q,\eta)d\eta
   \eqast
  (we can apply Fubini's theorem because there exists $c>0$ such that $\Im(\eta-\xi)>c|\eta|$ for $\eta\in \cL^+$
  and $\xi\in \cL^-$). Simplifying,
  \bbe\label{hWp}
  \hW^+_j(h_-,h_+; q,\xi)=-\frac{1}{2\pi i}\int_{\cL^+}\frac{e^{i(h_+-h_-)\eta}}{\eta-\xi}
  \frac{\phi^-(q,\eta)}{\phi^+(q,\eta)}\hW^-_{j-1}(h_-,h_+;q,\eta)d\eta,\ \xi\in\cL^-.
  \ee
  Similarly, for $\xi\in \cL^+$, we calculate
  \beqast
  \hW^-_{j}(h_-,h_+; q,\xi)&=& e^{ih_-\xi}\int_{-\infty}^{h_-}dy\, e^{-iy\xi}
  \frac{1}{2\pi}\int_{\cL^-}e^{i(y-h_+)\eta}\frac{\phi^+(q,\eta)}{\phi^-(q,\eta)}\hW^+_{j-1}(h_-,h_+;q,\eta)d\eta\\
  &=&\frac{e^{ih_-\xi}}{2\pi i}\int_{\cL^-}\frac{e^{-ih_-(\xi-\eta)-ih_+\eta}}{\eta-\xi}
  \frac{\phi^+(q,\eta)}{\phi^-(q,\eta)}\hW^+_{j-1}(h_-,h_+;q,\eta)d\eta. 
  \eqast
  Simplifying,
  \bbe\label{hWm}
  \hW^-_{j}(h_-,h_+; q,\xi)=\frac{1}{2\pi i}\int_{\cL^-}\frac{e^{-i(h_+-h_-)\eta}}{\eta-\xi}
  \frac{\phi^+(q,\eta)}{\phi^-(q,\eta)}\hW^+_{j-1}(h_-,h_+;q,\eta)d\eta,\ \xi\in \cL^+.
  \ee
  In the cycle $j=2,3\ldots,$ we calculate  $\hW^\pm_{j}(h_-,h_+; q,\xi)$ and partials sums of the series
   \bbe\label{Winf}
   \hW^\pm(h_-,h_+; q,\xi)=\sum_{j=1}^\infty (-1)^j \hW^\pm_j(h_-,h_+; q,\xi), \ \xi\in \cL^\mp.
   \ee
   \subsubsection{Final step}\label{sss:final_step}
      For $x\in (h_-, h_+)$, we calculate
      \bbe\label{tVqpm}
      \tV^1_\pm(lG;h_-,h_+;T,x)=\frac{1}{2\pi}\int_{\cL^\mp}e^{i(x-h_\pm)\xi}\phi^\pm_q(\xi)\hW^\pm(h_-,h_+; q,\xi)d\xi,
      \ee
      then
   \bbe\label{tVqntfin}
   V^1(lG;h_-,h_+;T,x)=\frac{1}{2\pi i}\int_{\Re q=\sg}dq\,\frac{e^{qT}}{q}(\tV^1_+(lG;h_-,h_+;T,x)+
   \tV^1_-(lG;h_-,h_+;T,x)),
   \ee
   and, finally, 
   \bbe\label{eq:double_final}
  V(G;h_-,h_+;T,x)=V_{\mathrm{euro}}(lG;T,x)+V^1(lG;h_-,h_+;T,x).
 \ee
 \begin{rem}\label{rem:convergence2}{\rm Theorem \ref{thm:convergence} implies that if 
 the Gaver-Stehfest method or GWR algorithm is used to numerically evaluate
 the Bromwich integral, then the series on the RHS of \eq{tVqpm} converge for any $h_-<x<h_+$
 and any $q>0$ used in the algorithm.
 
 If the sinh-acceleration is applied to the Bromwich integral, then we need to consider the series for complex $q$
 that appear in the process of the deformation, and prove that the series admit a bound via $C(1+|q|)^m$, for some
 $C,m>0$. 

In all likelihood, for any sinh-deformation of the contour of integration in the Bromwich integral, there exists $\De h>0$ such that
if $h_+-h_-<\De h$, then the series \eq{tVqpm} fails to converge for some $q$ of interest or all $q$. However, it follows from 
Lemma \ref{lem: boundKpmmp} below that for any sinh-deformation of the contour $\{\Re q=\sg\}$ and deformed contours $\cL^\pm$ that are
in agreement with the deformation of  $\{\Re q=\sg\}$, there exists $\De h>0$ independent of $lG$ such that if $h_+-h_->\De h$, then
the series on the RHS of \eq{Winf} converges and admits  an upper bound independent of $q$.
Hence, we apply the Laplace inversion procedures formally, and verify that $h_+-h_-$ is sufficiently large so that the method works comparing the results obtains with
different Laplace inversion algorithms and different contour deformations.   
 
 }
 \end{rem}
\subsubsection{Efficient numerical evaluation of the series $\sum_{j=1}^\infty (-1)^j \hW^\pm_j(h_-,h_+; q,\xi)$. I}\label{sss:effhWseriesI}
Define operators $\cK_{+-}(=\cK_{-+}(q; \cL^+; h_-,h_+))$ and $\cK_{-+}(=\cK_{+-}(q; \cL^-; h_-,h_+))$ by
\beqa\label{defKmp}
\cK_{-+}\hu(\xi)&=&\frac{1}{2\pi }\int_{\cL^+}\frac{e^{i(h_+-h_-)\eta}}{\eta-\xi}
  \frac{\phi^-(q,\eta)}{\phi^+(q,\eta)}\hu(\eta),\ \xi\in \cL^-,\\
  \label{defKpm}
\cK_{+-}\hu(\xi)&=&\frac{1}{2\pi }\int_{\cL^-}\frac{e^{-i(h_+-h_-)\eta}}{\eta-\xi}
  \frac{\phi^+(q,\eta)}{\phi^-(q,\eta)}\hu(\eta)d\eta,\ \xi\in \cL^+.
  \eqa
  For $\be\in \bR$, define $w_\be(\xi)=e^{\be |\xi|}$, and denote by $L^\be_\infty(\cL^\pm)$ the space of measurable functions on $\cL^\pm$ with the finite norm $\|\hu\|_{L^\be_\infty(\cL^\pm)}:=\|w_\be \hu\|_{L_\infty(\cL^\pm)}$. 
   Let $\om^\pm$ be from the definition of $\cL^\pm=\cL^\pm_{\chi_{\om^{1,\pm}, b^\pm, \om^\pm}}$,
  and set $\be_\pm =-(h_+-h_-)\sin|\om^\pm|$.
\begin{lem}\label{lem:boundcKpm} Let $X$ be a SINH-regular process and $q>0$. Then,
  for any  $\eps>0$, there exists $C(=C_{\ga,\eps,q})>0$ such that
  
  \beqa\label{boundcKmp}
  |\cK_{-+}\hu(\xi)|&\le &C (1+|\xi|)^{-1} \|\hu\|_{L^{\be_++\eps}_\infty(\cL^+)},\ \Im\xi\le 0,\ \hu\in L^{\be_++\eps}_\infty(\cL^+),
  \\\label{boundcKpm}
  |\cK_{+-}\hu(\xi)|&\le &C (1+|\xi|)^{-1} \|\hu\|_{L^{\be_-+\eps}_\infty(\cL^-)},\ \Im\xi\ge 0, \ \hu\in L^{\be_-+\eps}_\infty(\cL^-).
  \eqa
  \end{lem}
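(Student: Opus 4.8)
The plan is to majorise the integrand of $\cK_{-+}$ pointwise and integrate, using two inputs --- the geometry of the sinh-contour $\cL^+$ and polynomial bounds on the Wiener-Hopf factors along it; the estimate for $\cK_{+-}$ then follows by the reflection $+\leftrightarrow -$, $\eta\mapsto-\eta$. \emph{Geometry of $\cL^+$.} For $\eta=\chi_{\om^+_1,b^+,\om^+}(y)$ with $\om^+\in(0,\pi/2)$, \eq{eq:sinh} gives $\Im\eta=\om^+_1+b^+\cosh y\,\sin\om^+$, so $\Im\eta\ge\om^+_1+b^+\sin\om^+=:\de_+>0$ on all of $\cL^+$ (by $\cosh y\ge1$, $\sin\om^+>0$ and the admissibility condition $\om^+_1+b^+\sin\om^+>0$); since $\Im\xi\le0$ this yields $|\eta-\xi|\ge\Im\eta-\Im\xi\ge\de_+$, so the Cauchy kernel never degenerates. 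From $|\eta|^2=(\om^+_1+b^+\cosh y\,\sin\om^+)^2+(b^+\sinh y\,\cos\om^+)^2$ one checks $\Im\eta\ge(\sin\om^+)|\eta|-C_0$ for some $C_0$, whence $|e^{i(h_+-h_-)\eta}|=e^{-(h_+-h_-)\Im\eta}\le C_1e^{\be_+|\eta|}$ on $\cL^+$, with $\be_+=-(h_+-h_-)\sin|\om^+|$ exactly as in the statement.

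\emph{Factor bounds.} The key analytic input is that $|\phimq(\eta)/\phipq(\eta)|$ grows at most polynomially on $\cL^+$. Since $\cL^+\subset\{\Im\eta>0\}$ and $\phipq(\eta)=\bE[e^{i\eta\barX_{T_q}}]$ with $\barX_{T_q}\ge0$, we have $|\phipq|\le1$ there; moreover $\phipq$ is zero-free (it is the exponential of the convergent integral \eq{phip1}) and, for SINH-regular $X$, decays along $\cL^+$ no faster than an inverse power of $1+|\eta|$ by the asymptotics \eq{asympsisRLPE}. Next, $\phimq$ continues analytically along $\cL^+$ --- this is where the admissibility of the contour parameters is used, namely the wings of $\cL^+$ must lie inside the cone of analyticity of $\phimq$ supplied by \lemm{lem:WHF-SINH} --- and the reciprocity relation \eq{eq:rec_phimq}, rewritten as $\phimq(\eta)/\phipq(\eta)=q/((q+\psi(\eta))\phipq(\eta)^2)$, then yields the polynomial bound once we invoke $|q+\psi(\eta)|\ge c(1+|\eta|^\nu)$ on $\cL^+$; the latter holds because $q+\psi$ is non-vanishing there (by \eq{crucial}, and, for SL-processes, by the absence of zeros of $q+\psi$ off $i\bR$) together with \eq{asympsisRLPE}, cf. \lemm{lem:cones_Brom}.

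\emph{Conclusion.} Writing $|\hu(\eta)|\le\|\hu\|_{L^{\be_++\eps}_\infty(\cL^+)}e^{-(\be_++\eps)|\eta|}$ and using $e^{\be_+|\eta|}e^{-(\be_++\eps)|\eta|}=e^{-\eps|\eta|}$, the integrand of $\cK_{-+}\hu(\xi)$ is majorised by $C'\|\hu\|_{L^{\be_++\eps}_\infty(\cL^+)}(1+|\eta|)^{m}e^{-\eps|\eta|}/|\eta-\xi|$ for a suitable $m$. Since $|\eta-\xi|\ge\de_+$ the integral converges and is bounded uniformly in $\xi$; to extract the factor $(1+|\xi|)^{-1}$, split $\cL^+$ at $|\eta|=|\xi|/2$. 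On $\{|\eta|\le|\xi|/2\}$ one has $|\eta-\xi|\ge|\xi|/2$, so this portion is at most $(2/|\xi|)\int_{\cL^+}(1+|\eta|)^{m}e^{-\eps|\eta|}\,|d\eta|$; on $\{|\eta|>|\xi|/2\}$ one has $e^{-\eps|\eta|}\le e^{-\eps|\xi|/4}e^{-\eps|\eta|/2}$, so this portion is at most $\de_+^{-1}e^{-\eps|\xi|/4}\int_{\cL^+}(1+|\eta|)^{m}e^{-\eps|\eta|/2}\,|d\eta|$. Both are $\le C(1+|\xi|)^{-1}$, which proves \eq{boundcKmp}; \eq{boundcKpm} is the mirror image, via \eq{eq:rec_phipq} and the contour $\cL^-$, on which $\Im\eta\le-(\sin|\om^-|)|\eta|+C_0'$ and $|e^{-i(h_+-h_-)\eta}|\le C_1'e^{\be_-|\eta|}$.

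The main obstacle is the second step --- establishing the polynomial two-sided control of the Wiener-Hopf factors along $\cL^+$ and $\cL^-$, and in particular that $\phimq$ (resp. $\phipq$) really does extend analytically onto $\cL^+$ (resp. $\cL^-$) with at most polynomial growth. Granted this (which is where the SINH-regularity hypotheses and the admissibility constraints on $(\om^\pm,b^\pm,\om^\pm_1)$ enter, through \eq{asympsisRLPE} and \lemm{lem:WHF-SINH}), the remainder is the elementary Cauchy-kernel estimate above; it in fact yields exponential decay of $\cK_{\pm\mp}\hu(\xi)$ in $|\xi|$, of which the bound $(1+|\xi|)^{-1}$ recorded in the lemma is a convenient weakening.
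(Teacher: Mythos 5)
Your proof is correct and follows essentially the same route as the paper's: the paper combines exactly your three ingredients --- the decay $e^{\be_+|\eta|}$ of the exponential factor along $\cL^+$, polynomial boundedness of $\phimq/\phipq$ on $\cL^+$ (asserted there as a fact, which you sketch via the reciprocity relation and bounds on $\phipq$ and $q+\psi$), and the Cauchy-kernel separation, stated there as $|1/(\eta-\xi)|\le c(1+|\xi|+|\eta|)^{-1}$, which is equivalent to your splitting of $\cL^+$ at $|\eta|=|\xi|/2$. As in the paper, \eq{boundcKpm} then follows by symmetry.
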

  \begin{proof} Eq. \eq{boundcKmp} is immediate from the following three facts: 1) $\phi^-(q,\eta)/\phi^+(q,\eta)$ is polynomially bounded
  on $\cL^+$; 2) there exists $C>0$ s.t. for $\xi\in \{\Im\xi\le 0\}$ and $\eta\in \cL^+$,
  $|1/(\eta-\xi)|\le c(1+|\xi|+|\eta|)^{-1}$, 3) $\eta\mapsto e^{ i(h_+-h_-)\eta}$  decays as $e^{\be_+|\eta|}$
  as $(\cL^+\ni)\eta\to \infty$.
  The proof of \eq{boundcKpm} is by symmetry.   
  \end{proof}
  The following Lemma is needed if the sinh-acceleration is applied to the Bromwich integral.
  \begin{lem}\label{lem: boundKpmmp} 
  Let $X$ be a SINH-regular process satisfying the conditions in Lemma \ref{lem:cones_Brom}, and let $\cL^L$ and $\cL^\pm$ be in the domains of the $q$- and $\xi$- planes where the bound \eq{bound:two cones} holds. 
  Then 
  \begin{enumerate}[(a)]
  \item
  there exist $C_1,C_2>0$ such that, for all $q\in \cL^L$,
  \beqa\label{bound:ration_cLp}
  \frac{\phi^\mp_q(\xi)}{\phi^\pm_q(\xi)}&=&C_1 \exp[C_2(\ln(1+|\xi|))^2],\ \xi\in \cL^\pm;
  \eqa

  \item
  there exist $C_\pm>0$ independent of $h_+, h_-$ and $q\in \cL^L$ such that 
  \beqa\label{boundcKmp2}  
  \|\cK_{-+}:L_\infty(\cL^+)\to L_\infty(\cL^-)\|&\le&C_+\int_{\cL^+}e^{-(h_+-h_-)\sin(\om^+)|\eta|+C_2(\ln(1+|\eta|))^2}|d\eta|,
  \\\label{boundcKpm2} 
  \|\cK_{+-}:L_\infty(\cL^-)\to L_\infty(\cL^+)\|&\le &C_-\int_{\cL^-}e^{-(h_+-h_-)\sin(|\om^-|)|\eta|+C_2(\ln(1+|\eta|))^2}|d\eta|.
  \eqa
  \end{enumerate}
  \end{lem}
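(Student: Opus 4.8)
The plan is to prove (a) first and then derive (b) by a direct estimate of the kernels $\cK_{-+},\cK_{+-}$.

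For (a) I would fix $\xi\in\cL^+$ (the case $\xi\in\cL^-$ being symmetric) and use the reciprocity relation \eqref{eq:rec_phimq}, which gives
\[
\frac{\phi^-_q(\xi)}{\phi^+_q(\xi)}=\frac{q}{\phi^+_q(\xi)^2\,(q+\psi(\xi))},
\]
so that it suffices to bound $|q/(q+\psi(\xi))|$ from above and $|\phi^+_q(\xi)|$ from below. The first bound is immediate from \eqref{bound:two cones}: since $\cL^+\subset i(\mum,\mup)+(\cC_{\gampr,\gappr}\cup\{0\})$ and $q\in\cL^L$, we get $|q/(q+\psi(\xi))|\le 1/c$, uniformly in $q\in\cL^L$ and $\xi\in\cL^+$. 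For the second, I would use the representation \eqref{phipq_def} with the fixed contour $\cL^-$ (which lies in the open lower half-plane, hence below every $\xi\in\cL^+$); writing $|\phi^+_q(\xi)|=\exp\bigl(\Re\ln\phi^+_q(\xi)\bigr)\ge\exp\bigl(-|\ln\phi^+_q(\xi)|\bigr)$, this reduces to bounding
\[
\bigl|\ln\phi^+_q(\xi)\bigr|\ \le\ \frac{1}{2\pi}\int_{\cL^-}\Bigl|\frac{\xi}{\eta(\xi-\eta)}\Bigr|\,\bigl|\ln(1+\psi(\eta)/q)\bigr|\,|d\eta|.
\]

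The heart of the argument is the estimate $|\ln\phi^+_q(\xi)|\le C\bigl(1+(\ln(1+|\xi|))^2\bigr)$, uniform in $q\in\cL^L$. Since $X$ is an SL-process (and $\cL^L,\cL^\pm$ are chosen in agreement with the deformation), $1+\psi(\eta)/q$ stays off $(-\infty,0]$ along $\cL^-$, so $\ln(1+\psi(\eta)/q)$ is single-valued; combining \eqref{bound:two cones} (which gives $|q|\ge\sg'>0$ on $\cL^L$ and $|q+\psi(\eta)|\ge c(|q|+|\eta|^\nu)$), the growth $|\psi(\eta)|=O(|\eta|^\nu)$, and the sectorial asymptotics \eqref{asympsisRLPE} (which keep $\arg(1+\psi(\eta)/q)$ bounded), one gets $|\ln(1+\psi(\eta)/q)|\le C(1+\ln(1+|\eta|))$ uniformly in $q\in\cL^L$, $\eta\in\cL^-$. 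Because $\cL^+$ and $\cL^-$ cross $i\bR$ in the open upper and lower half-planes and their wings leave at fixed nonzero angles, $|\xi-\eta|\ge c(|\xi|+|\eta|)$ for $\xi\in\cL^+,\eta\in\cL^-$; moreover, with $\eta=\chi_{\om^-_1,b^-,\om^-}(y)$ one has $|\eta|$ bounded below, $|d\eta|\le C|\eta|\,dy$, and $|\eta|\asymp e^{|y|}$ for large $|y|$. Hence
\[
\bigl|\ln\phi^+_q(\xi)\bigr|\ \le\ C\int_{\bR}\frac{|\xi|}{|\xi|+|\eta(y)|}\,\bigl(1+\ln(1+|\eta(y)|)\bigr)\,dy,
\]
and I would split the integral at $|y|=\ln(1+|\xi|)$: on $|y|\le\ln(1+|\xi|)$ the first factor is $\le1$ and the second is $O(1+|y|)$, contributing $O\bigl((\ln(1+|\xi|))^2\bigr)$; on $|y|>\ln(1+|\xi|)$ the factor $|\xi|/(|\xi|+|\eta(y)|)\le C|\xi|e^{-|y|}$ decays, and since $e^{-\ln(1+|\xi|)}\le1/|\xi|$ the tail contributes only $O(1+\ln(1+|\xi|))$. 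This yields $|\phi^+_q(\xi)|^{-2}\le C_1 e^{C_2(\ln(1+|\xi|))^2}$, and combined with the bound on $|q/(q+\psi(\xi))|$ this gives (a), the symmetric estimate covering $\xi\in\cL^-$.

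For (b) I would take $\hu\in L_\infty(\cL^+)$ and $\xi\in\cL^-$, and estimate, from \eqref{defKmp},
\[
|\cK_{-+}\hu(\xi)|\ \le\ \frac{\|\hu\|_{L_\infty(\cL^+)}}{2\pi}\int_{\cL^+}\frac{|e^{i(h_+-h_-)\eta}|}{|\eta-\xi|}\,\Bigl|\frac{\phi^-_q(\eta)}{\phi^+_q(\eta)}\Bigr|\,|d\eta|.
\]
On $\cL^+$ one has $\Im\eta\ge\sin(\om^+)|\eta|$ (for the usual choice $\om^+_1\ge0$; in general up to a bounded additive constant depending only on the contour, which one absorbs), so $|e^{i(h_+-h_-)\eta}|=e^{-(h_+-h_-)\Im\eta}\le e^{-(h_+-h_-)\sin(\om^+)|\eta|}$; and since $\eta$ lies in the upper and $\xi$ in the lower half-plane, $|\eta-\xi|\ge\Im\eta-\Im\xi\ge\kappa_++\kappa_->0$ with $\kappa_\pm$ depending only on the contours. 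Inserting the bound from (a) for $|\phi^-_q(\eta)/\phi^+_q(\eta)|$ and taking the supremum over $\xi\in\cL^-$ gives \eqref{boundcKmp2} with $C_+=C_1/\bigl(2\pi(\kappa_++\kappa_-)\bigr)$, independent of $h_+,h_-$ and of $q\in\cL^L$; \eqref{boundcKpm2} is obtained symmetrically.

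I expect the main obstacle to be the uniform-in-$q$ logarithmic-square estimate in (a): one must control $\int_{\cL^-}\bigl|\xi\ln(1+\psi(\eta)/q)/(\eta(\xi-\eta))\bigr|\,|d\eta|$, which is not bounded but grows like $(\ln|\xi|)^2$, and do so uniformly as $q$ runs over the unbounded contour $\cL^L$ (along which $\psi(\eta)/q$ interpolates between moderate values and $0$); the two-region split at $|y|\sim\ln|\xi|$ is exactly what produces the square. A secondary point is the $h_\pm$-independence of $C_\pm$ in (b), which rests on the geometric fact that $\Im\eta$ is comparable to $|\eta|$ (up to an additive constant) on the sinh-contours, so that $e^{i(h_+-h_-)\eta}$ contributes precisely the decay $e^{-(h_+-h_-)\sin(|\om^\pm|)|\eta|}$ displayed on the right-hand side rather than a factor growing with $h_+-h_-$.
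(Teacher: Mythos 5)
Your proof is correct and takes essentially the same route as the paper's: for (a), the reciprocity relation \eqref{eq:rec_phimq} together with \eqref{bound:two cones}, the uniform estimate $|\ln(1+\psi(\eta)/q)|\le C(1+\ln(1+|\eta|))$ (valid because $\cL^L$ is bounded away from $0$), and a split of the integral at $|\eta|\sim|\xi|$ producing the $(\ln(1+|\xi|))^2$ bound; for (b), a direct kernel estimate using the exponential decay of $e^{\pm i(h_+-h_-)\eta}$ along $\cL^\pm$. The only differences are cosmetic (you work in the $y$-parametrization and use a uniform lower bound on $|\eta-\xi|$ where the paper uses $|\eta-\xi|\ge c(|\xi|+|\eta|)$), and your write-up is, if anything, more detailed than the paper's.
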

  \begin{proof}
  (a) We use \eq{phipq_def} and \eq{eq:rec_phimq}. Since $|1/(\eta-\xi)|\le C_3/(|\xi|+\eta|)$, where $C_3>0$ is independent of
  $\xi\in \cL^-$ and $\eta\in \cL^+$, it follows from \eq{bound:two cones}
  that \eq{bound:ration_cLp} will be proved once we derive the bound
  \bbe\label{intbound0}
  \int_1^\infty \frac{|\xi| |\ln(1+\psi(\eta)/q)|}{\eta(\eta+|\xi|)}d\eta\le C_4 (1+(\ln(1+|\xi|))^2),
  \ee
  where $C_4$ is independent of $q\in \cL^L$ and $|\xi|$. 
  Since $\cL^L$ is bounded away from 0, we may prove \eq{intbound0} replacing $|\ln(1+\psi(\eta)/q)|$ with
  $C_5(1+\ln (1+\eta))$, where $C_5$ is independent of $q$ and $\eta$. After that, we use
  \[
  \int_1^{|\xi|} \frac{|\xi| (1+\ln (1+\eta))}{\eta(\eta+|\xi|)}d\eta<\int_1^{|\xi|} \frac{(1+\ln (1+\eta))}{\eta}d\eta\le C_6(1+(\ln(1+|\xi|))^2),
  \]
  and 
  \beqast
  \int_{|\xi|^\infty} \frac{|\xi| (1+\ln (1+\eta))}{\eta(\eta+|\xi|)}d\eta&<&|\xi|\int_{|\xi|}^\infty \frac{(1+\ln (1+\eta))}{\eta^2}d\eta\le (1+\ln(1+|\xi|))+C_7|\xi|\int_{|\xi|}^\infty \frac{d\eta}{\eta^2}.
  \eqast
  (b) follows from (a) since $|1/(\eta-\xi)|\le C_3/(|\xi|+\eta|)$, where $C_3>0$ is independent of
  $\xi\in \cL^-$ and $\eta\in \cL^+$.
     \end{proof}
  We write \eq{hWp} and \eq{hWm} as 
  \bbe\label{hWpmj}
  \hW^+_{j+1}=i\cK_{-+}\hW^-_j,\  \hW^-_{j+1}=-i\cK_{+-}\hW^+_j, j=1,2,\ldots
  \ee
  and calculate $\hW^\pm_{j+1}$ and the partial sums in the cycle in $j=1,2,\ldots, M_0$. For the choice
  of the truncation parameter $M_0$
  given the error tolerance,  see Remark \ref{rem:convergence}. This choice is made assuming that
  the total error of calculation of the individual terms is sufficiently small and can be disregarded.
  
  We calculate  $\hW^\pm_{j+1}$ at points of sinh-deformed uniform grids 
  $\xi^\mp_k=i\om^{1,\mp}+b^\mp \sinh(i\om^\mp+y^\mp_k)$, $y^\mp_k=\ze^\mp k$, $k\in \bZ$, 
  on $\cL^\mp$, truncate the grids, and 
  approximate the operators $\cK_{-+}, \cK_{+-}$  with the corresponding matrix operators. The parameters of the conformal deformations
  and corresponding changes of variables and steps $\ze^\pm$ are chosen
  as in \cite{Contrarian,EfficientLevyExtremum}. If GWR algorithm is used, then the choice is especially simple;
  if the sinh-acceleration is applied to the Bromwich integral, then the parameters of the deformations of the three contours must be in a certain agreement. See \cite{EfficientLevyExtremum} for details.
  The truncation parameters $N^\pm$ are chosen taking into account the exponential rate of decay of the kernels of the integral operators
  $\cK_{-+}$ and $\cK_{+-}$ w.r.t. the second argument, and exponential decay of $e^{i(x-h_-)\xi}$ as $\xi\to\infty$ along $\cL^+$,
  and $e^{i(x-h_+)\xi}$ as $\xi\to\infty$ along $\cL^-$. In the $y^\pm$-coordinates, the rate of decay is double-exponential,
  hence, the truncation parameters $\La^\pm=N^\pm\ze^\pm$  sufficient to satisfy a small error tolerance $\eps$
  are moderately large. As a simple rule of thumb, we suggest to choose $\La^\pm$ so that
  \bbe\label{choice_Lapm}
  \exp[b^-(x-h_+)\ka_-\sin|\om^-|e^{\La^-}]<\eps,\ \exp[b^+(h_--x)\ka_+\sin(\om^+)e^{\La_+}]<\eps,
  \ee
  where $\ka_\pm\in (0,0.5)$, e.g., $\ka_\pm=0.4$. Note that a choice of a smaller $\ka_\pm$, e.g., $\ka_\pm=0.3$,
 does not  increase  $N_\pm$ significantly but makes the prescription more reliable. 
    Keeping the notation $\cK_{-+}$ and  $\cK_{+-}$ for the matrices, we calculate the matrix elements as follows:
  \beqa\label{Kpmjik}
  \cK_{+-}&=&\frac{\ze^-b^-}{2\pi}\left[\frac{e^{-i(h_+-h_-)\xi^-_k}}{\xi^-_k-\xi^+_j}
  \cdot\frac{\phipq(\xi^-_k)}{\phimq(\xi^-_k)}\cosh(i\om^-+y^-_k)\right]_{|j|\le N^+, |k|\le N^-}, \\\label{Kmpjik}
  \cK_{-+}&=&\frac{\ze^+b^+}{2\pi}\left[\frac{e^{i(h_+-h_-)\xi^+_k}}{\xi^+_k-\xi^-_j}
  \cdot\frac{\phimq(\xi^+_k)}{\phipq(\xi^+_k)}\cosh(i\om^++y^+_k)\right]_{|j|\le N^-, |k|\le N^+}.
  \eqa
   \subsubsection{Efficient evaluation of the series $\sum_{j=1}^\infty (-1)^j \hW^\pm_j(h_-,h_+; q,\xi)$. II}\label{ss:effhWseriesII}
   If the sizes of matrices $\cK_{-+}$ and  $\cK_{+-}$, hence, 
  the sizes of matrices $\cK^+:=\cK_{-+}\cK_{+-}, \cK^-:=\cK_{+-}\cK_{-+}$ are moderate so that the inverse matrices $(I-\cK^\pm)^{-1}$
  can be efficiently calculated,  the following modification of the scheme in Sect. \ref{sss:effhWseriesI} can be used to decrease the CPU time.
 
  It follows from \eq{hWpmj} that, for $k=1,2, \ldots$, and $m=1,2$,
 \bbe\label{hWevenodd}
  \hW^+_{2k+m}=(\cK^-)^k\hW^+_m,\ 
  \hW^-_{2k+m}=(\cK^{+})^k\hW^-_m.
  \ee
  After $\hW^\pm_m, m=1,2,$ are calculated,
  we evaluate 
  \beqa\label{sumhWpm}
  \sum_{j=1}^\infty (-1)^j \hW^\pm_j(h_-,h_+; q,\xi)&=&\sum_{k=0}^{+\infty}(\cK^\pm)^k
  (\hW^\pm_2(h_-,h_+; q,\xi)-\hW^\pm_1(h_-,h_+; q,\xi))\\\nonumber
  &=&(I-cK^\pm)^{-1}(\hW^\pm_2(h_-,h_+; q,\xi)-\hW^\pm_1(h_-,h_+; q,\xi)).
  \eqa
  It follows from Theorem \ref{thm:convergence} that the series $\cF^{-1}\sum_{k=0}^{+\infty}(\cK^\pm)^k\cF$
  converges in the operator norm for operators acting in $L_\infty$, hence, the inverses 
  \bbe\label{invCKpm}
  (I-cK^\pm)^{-1}=\sum_{k=0}^{+\infty}(\cK^+)^k
  \ee
  are well-defined; the matrix approximation are discussed above. 
  
  \begin{rem}\label{control_accum_errors}{\rm
  An accurate theoretical control of
accumulated errors of the calculation of each integral in the iterative procedure is difficult and impractical. 
 As in the previous papers where
the conformal deformation technique was used, we suggest to control the error changing the parameters of the conformal deformations.
If the difference between the results obtained with different deformations is of the order of $10^{-m}$, $m\ge 7$, then the probability that the error of each result is of the order higher than $10^{-7}$ is negligible. 
}
\end{rem}
 
  \section{Algorithms}\label{s:algo_numer}
  \subsection{Algorithm for the double no-touch option}\label{ss:algo_ntdouble}
 Steps I-V are preliminary ones; Steps VI-X constitute the main block of the algorithm and
 are performed for each $q$ used in the numerical Laplace inversion block. The latter block is  Step XI of the algorithm.
 We formulate the algorithm assuming that the sinh-acceleration is applied to the Bromwich integral; if the Gaver-Wynn Rho algorithm is used, the modification of the first  and last steps are as described in 
 \cite{EfficientLevyExtremum} for the single barrier case. 
 The general prescriptions for parameters of admissible deformations are the same as in  \cite{EfficientLevyExtremum}
 but we improve the efficiency of the scheme in  \cite{EfficientLevyExtremum} using shorter grids for the calculations in
 the main block that the grid used to evaluate the Wiener-Hopf factors; in \cite{EfficientLevyExtremum}, the same long grids are used for all purposes.

  \begin{enumerate}[Step I.]
\item 
{\em Grid for the Bromwich integral.} Choose the sinh-deformation  and grid for the simplified trapezoid rule: $\vec{y}:=\ze_\ell*(0:1:N_\ell)$,
$\vec{q}:=\sg_\ell+i*b_\ell*\sinh(i*\om_\ell+\vec{y})$.
Calculate  the derivative $\vec{der_\ell}:=i*b_\ell*\cosh(i*\om_\ell+\vec{y})$.
\item
{\em Grids for the approximations of $\hW^\pm_m$ and operators $\cK_{-+}, \cK_{+-}, \cK^\pm$}.  Choose  the sinh-deformations and grids for the simplified trapezoid rule on $\cL^\pm$: $\vec{y^\pm}:=\ze^\pm*(-N^\pm:1:N^\pm)$,
$\vec{\xi^\pm}:=i*\om_1^\pm+ b^\pm*\sinh(i*\om^\pm+i\vec{y^\pm})$.  Calculate $\psi^\pm:=\psi(\vec{\xi^\pm})$ and 
$\vec{der^\pm}:=b^\pm*\cosh(i*\om^\pm+\vec{y^\pm}).
$
\item
{\em Grids for evaluation of the Wiener-Hopf factors $\phi^\pm_q(\xi)$.} Choose longer and finer grids for the simplified trapezoid rule on $\cL^\pm_1$: $\vec{y^\pm_1}=\ze_1^\pm*(-N^\pm_1:1:N^\pm_1)$,
$\vec{\xi^\pm_1}:=i*\om^\pm_1+ b^\pm_1*\sinh(i*\om_1^\pm+i\vec{y^\pm_1})$.  Calculate $\psi^\pm_1:=\psi(\vec{\xi^\pm_1})$ and 
$\vec{der^\pm_1}:=b^\pm_1*\cosh(i*\om_1^\pm+\vec{y^\pm_1}).
$
Note that it is unnecessary to use sinh-deformations different from the ones on Step II but it is advisable to write a program allowing
for different deformations in order to be able to control errors of each block of the program separately.
\item
Calculate 2D arrays 
\beqast
D^{+-}_1&:=&1./(\mathrm{conj}(\vec{\xi^-_1})'*\mathrm{ones}(1,2*N^++1)-\mathrm{ones}(2*N^-_1+1,1)*\vec{\xi^+})),\\
D^{-+}_1&:=&1./(\mathrm{conj}(\vec{\xi^+_1})'*\mathrm{ones}(1,2*N^-+1)-\mathrm{ones}(2*N^+_1+1,1)*\vec{\xi^-})),\\
D^{+-}&:=&1./(\mathrm{conj}(\vec{\xi^-})'*\mathrm{ones}(1,2*N^++1)-\mathrm{ones}(2*N^-+1,1)*\vec{\xi^+})),\\
D^{-+}&:=&1./(\mathrm{conj}(\vec{\xi^+})'*\mathrm{ones}(1,2*N^-+1)-\mathrm{ones}(2*N^++1,1)*\vec{\xi^-})).
\eqast
\item
{\em Calculate  $\hW^+_1=-i./\vec{\xi^-}, \hW^-_1=i./\vec{\xi^+}$.}
\item
{\em Calculate} $\vec{\phipq}=\phipq(\vec{\xi^+})$ and $\vec{\phimq}=\phimq(\vec{\xi^-})$:
\beqast
\vec{\phipq}&:=&\exp((\ze^-_1*i/(2*\pi))*\vec{\xi^+_1}.*((\ln(1+\psi^-_1/q)./\vec{\xi^-_1}.*\vec{der^-_1})*D^{-+}_1)),\\
\vec{\phimq}&:=&\exp(-(\ze^+_1*i/(2*\pi))*\vec{\xi^-_1}.*((\ln(1+\psi^+_1/q)./\vec{\xi^+_1}.*\vec{der^+_1})*D^{+-}_1)),
\eqast
and then
\bbe\label{phimppm}
\phipq(\vec{\xi^-}):=1./(1+\psi^-/q)./\vec{\phimq},\ \phimq(\vec{\xi^+}):=1./(1+\psi^+/q)./\vec{\phipq}.
\ee
\item
{\em Calculate   the ratios}
\beqast
\phimq(\vec{\xi^+})./\phipq(\vec{\xi^+})&:=&1./(1+\psi^+/q)./\vec{\phipq}.^2;\\
\phipq(\vec{\xi^-})./\phimq(\vec{\xi^-})&=&1./(1+\psi^-/q)./\vec{\phimq}.^2.
\eqast
\item
{\em Calculate matrices $\cK_{+-}, \cK_{+-}$:}
\beqast
\cK_{-+}&:=&(\ze^+/(2*\pi))*\mathrm{diag}(\vec{der^+}.*(\phimq(\vec{\xi^+})./\phipq(\vec{\xi^+})).*\exp(i*(h_+-h_-)*\vec{\xi^+}))*D^{-+};\\
\cK_{+-}&:=&(\ze^-/(2*\pi))*\mathrm{diag}(\vec{der^-}.*(\phipq(\vec{\xi^-})./\phimq(\vec{\xi^-})).*\exp(-i*(h_+-h_-)*\vec{\xi^-}))*D^{+-}.\\
\eqast
\item
Use one of the following three blocks. Calculations using Block (1) and either Block (2) or Block (3) can be used to check the accuracy of the result.
\begin{enumerate}[(1)]
\item\begin{itemize}
\item
Assign $\hU^\pm_1=-\hW^\pm_1,$ $\hW^\pm:=\hU^\pm_1$.
\item
In the cycle $j=1,2,\ldots,M_0$, calculate
$
 \hU^+_{2}:=-i*\hU^-_1*\cK_{-+},\\ \hU^-_{2}:=i*\hU^+_1*\cK_{+-},\
 \hU^\pm:=\hW^\pm+\hU^\pm_2, \ \hU^\pm_1:=\hU^\pm_2$.
 \end{itemize}
\item
Calculate 
\begin{itemize}
\item
$\cK^-:=\cK_{-+}*\cK_{+-}, 
\cK^+:=\cK_{+-}*\cK_{-+}$;
\item
 $\hW^+_2:=i*\hW^-_1*\cK_{-+}, \hW^-_2:=-i*\hW^+_1*\cK_{+-}$;
 \item
 $
\hW^{\pm,0}:=\hW^\pm_2-\hW^\pm_1$; 
 \item
 inverse matrices 
$(I-\cK^\pm)^{-1}$;
\item
$\hW^\pm=\hW^{\pm,0}*(I-\cK^\mp)^{-1}$.
\end{itemize}
\item
Replace the last two steps of Block (2) with 
\[
\hW^\pm=\mathrm{conj}(\mathrm{linsolv}(\mathrm{diag}(\mathrm{ones}(2*N^\mp+1))-\mathrm{conj}(\cK^\mp)', \mathrm{conj}(\hW^{\pm,0})'))'.
\]
Typically, the program with  Block (1) achieves precision of the order of E-15 with $M_0=9$ or even $M_0=8$.
The CPU time is several times smaller than with Block (2). Program with Block (2) is faster if
the digitals or vanillas for many strikes need to be calculated. Block (3) is twice faster than Block (2) 
if applied only once.
\end{enumerate}
\item
{\em For $x\in (h_-,h_+)$, calculate} 
\beqast
 V^+&=&(\ze^-/(2*\pi))*\sum(\hW^+.*\exp(i*(x-h_+)*\vec{\xi^-}).*\phipq(\vec{\xi^-}).*\vec{der^-}), \\
V^-&=&(\ze^+/(2*\pi))*\sum(\hW^-.*\exp(i*(x-h_-)*\vec{\xi^+}).*\phimq(\vec{\xi^+}).*\vec{der^+});
\eqast
 this step can be easily parallelized for a given array $\{x_j\}$.
 \item
 {\sc Laplace inversion.} Set $Int(\vec{q})=(V^++V^-)./\vec{q}$, $Int(q_1)=Int(q_1)/2$, and,
 using the symmetry $\overline{\tV^1(q)}=\tV^1(\bar q)$, calculate
 \[
V^1=(\ze_{\ell}b_\ell/\pi)*\mathrm{real}(\mathrm{sum}(\exp(T*\vec{q}).*Int(\vec{q}).*\cosh(i*\om_\ell+\vec{y})). \]
\item
{\em Final step.} Set $V=1+V^1$.
\end{enumerate}

   \subsection{Algorithm for the double barrier digital or joint cpdf of $X_T, \barX_T, \uX_T$}\label{ss:algo_3double}
We assume that $h_-<a<h_+$. We need to make the following changes   in the algorithm for the double no-touch option:
\begin{enumerate}[(1)]
\item
Calculate the price of the digital option $V_{\mathrm{dig}}(a;T,x)$ with the payoff $\bfo_{(-\infty,a]}(x+X_T)$ at maturity
using the scheme in \cite{SINHregular}: set $x'=x-a+\mu T$, and 
\begin{itemize}
\item
if $x'\ge 0$, apply the sinh-change of variables and simplified trapezoid rule to
\[
V_{\mathrm{dig}}(a;T,x)=(2\pi)^{-1}\int_{\cL^+} \frac{e^{ix'\xi-\psi^0(\xi)}}{-i\xi}d\xi;
\]
\item
if $x'< 0$, apply the sinh-change of variables and simplified trapezoid rule to
\[
V_{\mathrm{dig}}(a;T,x)=1+(2\pi)^{-1}\int_{\cL^-} \frac{e^{ix'\xi-\psi^0(\xi)}}{-i\xi}d\xi.
\]
\end{itemize}
\item
Contrary to the case of double no-touch option, $\hW^\pm_1$ depend on $q$, and are expressed in terms of the Wiener-Hopf factors.
Therefore, Step V becomes Step VIII, and Steps VI-VIII become Steps V-VII. At Step VIII, we calculate
$\hW^\pm_1$ using \eq{phimppm} and 
$D^{-+}$ and $D^{+-}$:
\beqast
\hW^+_1&:=&-(\ze^+/(2*\pi))(\vec{der^+_1}.*\phimq(\xi^+)./\xi^+.*\exp(i*(h_+-a)*\xi^+))*D^{-+};\\
 \hW^-_1&:=&i./\xi^++(\ze^-/(2*\pi))*(\vec{der^-_1}.*\phipq(\xi^-)./(\xi^-).*\exp(i*(h_--a)*\xi^-))*D^{+-}.
\eqast  
\item
At the final step, set $V(G_a;h_-,h_+;T,x)=V_{\mathrm{dig}}(a;T,x)+V^1(G_a;h_-,h_+;T,x)$.

\end{enumerate}

 \subsection{Algorithm for the double barrier call option}\label{ss:algo_double_barr_call}
We assume that $h_-<a<h_+$. The changes are evident modifications of the changes in Sect. \ref{ss:algo_3double}:
the payoff function $G_a(x)=\bfo_{(-\infty,a]}(x)$ is replaced with $G_a=(e^x-e^a)_+$.
\begin{enumerate}[(1)]
\item
Calculate the price of the call option $V_{\mathrm{call}}(a;T,x)$ with the payoff $(e^{x+X_T}-e^a)_+$ at maturity
using the scheme in \cite{SINHregular}: set $x'=x-a+\mu T$, and 
\begin{itemize}
\item
if $x'\ge 0$, apply the sinh-change of variables and simplified trapezoid rule to
\[
V_{\mathrm{call}}(a;T,x)=-\frac{e^a}{2\pi}\int_{\cL^+} \frac{e^{ix'\xi-\psi^0(\xi)}}{\xi(\xi+i)}d\xi;
\]
\item
if $x'< 0$, apply the sinh-change of variables and simplified trapezoid rule to
\[
V_{\mathrm{call}}(a;T,x)=e^{x+T*\psi(-i)}-e^a-\frac{e^a}{2\pi}\int_{\cL^-} \frac{e^{ix'\xi-\psi^0(\xi)}}{\xi(\xi+i)}d\xi.
\]
\end{itemize}
\item
Contrary to the case of double no-touch option, $\hW^\pm_1$ depend on $q$, and are expressed in terms of the Wiener-Hopf factors.
Therefore, Step V becomes Step VIII, and Steps VI-VIII become Steps V-VII. On Step VIII, we 
 calculate
$\hW^\pm_1$ using \eq{phimppm} and $D^{-+}$ and $D^{+-}$:

\beqast
   \hW^+_1&:=&(\phimq(-i)*\exp(h_+))./(i*{\xi^-}-1)+\exp(a)./(-i*{\xi^-}) \\
   &&-(i*\ze^+*\exp(a)/(2*\pi))*(\exp(i*(h_+-a)*{\xi^+}).*\phipq({\xi^+})./{\xi^+}./({\xi^+}+i))*D^{-+},\\
  \hW^-_1&:=&(i*\ze^-*\exp(a)/(2*\pi))*(\exp(i*(h_--a)*{\xi^-}).*\phipq({\xi^-})./{\xi^-}./({\xi^-}+i))*D^{+-}.
   \eqast
\item
At the final step, set $V(G_a;h_-,h_+;T,x)=V_{\mathrm{call}}(a;T,x)+V^1(G_a;h_-,h_+;T,x)$.
\end{enumerate}

\section{Numerical examples}\label{s:numer}
\subsection{General remarks}\label{ss:numer_gen_rem}
The calculations in the paper
were performed in MATLAB 2017b-academic use, on a MacPro Chip Apple M1 Max Pro chip
with 10-core CPU, 24-core GPU, 16-core Neural Engine 32GB unified memory,
1TB SSD storage.
The  CPU times shown can be significantly improved using parallelized calculations of the Wiener-Hopf factors and the main block, for each $q$ used
in the Laplace inversion procedure, especially if the sinh-acceleration is applied to the Bromwich integral. The parallelization w.r.t.
$h_\pm, x, a$ is also possible.
As in the numerical examples in \cite{EfficientLevyExtremum},  we use a KoBoL with the characteristic exponent 
$
\psi(\xi)=-i\mu\xi+\psi^0(\xi),$ 
where $\psi^0$ is given by \eq{KBLnuneq01} with
$\lp=1,\lm=-2$ and
(I) $\nu=0.2$, hence, the process is close to Variance Gamma; (II) 
$\nu=1.2$, hence, the process is close to NIG, and of infinite variation. In addition, we include several examples with $\nu=0.8$:
the process is close to NIG but of finite variation.
In the majority of examples, $\mu=0$, which allows us to apply the sinh-acceleration to the Bromwich integral and efficiently control the
errors not only in the case $\nu\ge 1$ but in the case $\nu<1$ as well. 
In all examples, $c>0$ is chosen so that the second instantaneous moment $m_2=\psi^{\prime\prime}(0)=0.1$; the riskless rate $r$ is chosen from the no-arbitrage condition $r+\psi(-i)=0$.
The same algorithms
will produce results of similar efficiency for Stieltjest-L\'evy processes of the same type and order.  If the order is 0+ (VGP) or very close to 0,
the algorithms may require
much longer grids, hence, larger CPU time, to achieve the precision shown in our examples. Serious difficulties 
arise if the order is $\nu=1+$ or $\nu$ is very close to 1, and the L\'evy density is strongly asymmetric near 0 (e.g., 
KoBoL of order $\nu=1$ or close to 1 with $c_+\neq c_-$).
See related examples in \cite{ConfAccelerationStable} for stable L\'evy distributions. 
Longer arrays are needed if the underlying is very close to one of the barriers, which explains  rather large CPU times in Examples 
\ref{ex:NT4999} and \ref{ex:VdoubleDig}. Naturally, accurate calculations are especially difficult if the distance between
the barriers is very small, and the steepness parameters $\la_\pm$ are small in absolute value. We present examples for a moderately small distance $h_+-h_-=0.1$, and moderately small $\lp=1$, $\lm=-2$.

We consider double barrier no-touch options, digitals (equivalently,
the joint distribution of the process and its supremum and infimum processes), and calls. 
The time to maturity is small $T=0.004$, moderately small, $T=0.25$, moderate $T=1$, and moderately large: $T=3$ and $T=5$. In our examples, the distance between
barriers is rather small, hence, even at $T=1$ the prices are very small, and negligible at $T=3, 5$. Nevertheless, the relative error is rather small
even in the case of prices of the order of $10^{-12}$. 
We also produce the graphs of option prices not extremely close to maturity which demonstrate the apparent difficulties
of calculation of prices of double barrier options using time discretization (Carr's randomization) and interpolation of option prices at each time step (see \cite{BLdouble} and the bibliography therein).
We leave to the future the systematic study of accuracy of methods based on time discretization and calculations in the state space.
Numerical examples in \cite{EfficientDiscExtremum} show that, in the case of options with one barrier, a moderately good accuracy can be achieved using time discretization provided the calculations are in the dual space.

When the sinh-acceleration is applied to the Bromwich integral,
we can achieve the precision of the order of E-15 and better; in the same cases, if the GWR acceleration is applied,
the smallest errors are in the range E-11 to E-5; we surmise that the errors of the results obtained with the GWR acceleration in the case $\nu<1$ and $\mu\neq 0$ are of the same order. Differences between results obtained with GWR and different deformations of the contours of integration
in the formulas for the Laplace transform for $q>0$ are smaller, in some cases, by a factor of 10 and more (to save space, we
do not include the tables with these results). The benchmark values and other values in the tables are produced with $M_0=9$ using two sets of deformations of the contours of
integration and the algorithm with Step IX(1). We check the accuracy of the results running the program with Step IX (2), 
which is slower. For large maturities ($T=3$ in the case $\nu=1.2$ and $T=5$ in the case $\nu=0.2$), only the algorithm with Step IX(2) (or IX(3)) produces good results.  In the tables, we show the sizes of grids which do not lead to the increase of the order of errors of prices relative to the benchmark; by a more systematic effort, one can find arrays of smaller sizes, that produce results with the same precision.

Numerical experiments confirm the observation that we made in our previous publications
\cite{paraLaplace,paired,Contrarian,EfficientLevyExtremum} about the efficiency of the GWR method. Namely, the precision of the final result of the order of $10^{-6}-10^{-8}$ can be achieved if the Laplace transform is calculated with the precision of 
the order of $10^{-12}-10^{-14}$ (the general recommendation is the precision of the order of $10^{2.2M}=10^{-17}$, where $2M$ is the
number of terms in the GWR algorithm; we use $M=8$). 
The second observation which we made in \cite{SINHregular} in the context of applications of the sinh-acceleration to pricing European options is that,
contrary to the Fourier transform methods that do not use conformal deformations of the contours of integration,
calculations are more efficient in the case of processes close (but not very close) to the Variance Gamma (VG) model.
There are several conceptual explanations to this effect. The first one is the same as in the case of European options, namely,
if the order of the process $\nu<1$, the strips of analyticity in the new coordinates are wider than in the case of process of infinite variation,
hence, the step in the infinite trapezoid rule can be chosen larger. At the same time, in the new coordinates, the rates of decay of the integrands
 are approximately the same for all processes unless the process is VG or very close to VG.
 The second explanation follows from the analysis of the behavior of the price near the barrier (see \cite{NG-MBS,BIL}).
 For processes of infinite variation, the derivative of the price tends to infinity as the underlying approaches the barrier
 faster than for processes of finite variation. Since the last step of the algorithm is the Fourier inversion, the
 irregularity of the price requires the use of longer and finer grids. The third reason is that we compare the prices in two models with
 the same second instantaneous moment and time to maturity. If $\nu=0.2$,  very small jumps dominate, hence,  the probability of the process hitting one of the barriers is smaller, hence, the boundary effects are smaller as well.

\subsection{Examples of calculations at many points}\label{ss:numer_many}
The first two example show that even without parallelization w.r.t. $x$ and $a$, the calculation of the price at thousands points with the precision of the order of E-14 is possible in a dozen of seconds. 
\begin{example}\label{ex:NT4999}{\em In Fig. \ref{fig:DoubleNT4999_2}, we show the graph of a no-touch option as a function of
$x\in [-0.04998,0.04998]$, at 4,999 points. $X$ is KoBoL of order $\nu=1.2$, $\mu=0$, 
$\lp=1,\lm=-2$, $c=c_+=c_-$ is defined from $m_2:=\psi^{\prime\prime}(0)=0.1$. 
Precision E-14 is achieved in 11.0 sec.  if the sinh-acceleration is used.

The upper panel: graph for $x\in (h_-,h_+)$. The middle panel: graph in a small vicinity of $h_-$.
The lower panel: price shown in the middle panel is normalized by $(x-h_-)^{\nu/2}$. As it is proved in \cite{BIL} in the case of
single barrier options, in the case $\nu>1$ and  in the case $\nu\in (0,1), \mu=0$, the price $V(x)$ has the asymptotics $V(x)\sim A(x-h_-)^{\nu/2}$
as $x\downarrow h_-$.
Using the standard localization results for boundary problems for elliptic \cite{eskin} and quasi-elliptic (in particular, parabolic)
pseudo-differential operators \cite{DegEllEq,LevPan}, it is possible to prove that the asymptotics of the same form is valid for barrier options with two barriers (although, contrary to \cite{BIL}, an explicit formula for the asymptotic coefficient $A$ is impossible to derive). 
Numerical examples in \cite{BIL} demonstrate that the asymptotic formula is moderately accurate in a very small vicinity of
the barrier only, and the quality of approximation decreases as $\nu$ approaches either 0 or 1. The presence of the second barrier 
decreases the accuracy of the asymptotic approximation further still but, as the lower panel demonstrates, the relative error of the asymptotic formula is of the order of  10\%.
}
\end{example}

\begin{example}\label{ex:VdoubleDig}{\em In Fig. \ref{fig:VdoubleDig}, we show the graph of the double barrier digital as a function of $(x,a)\in 
[-0.049,0.049]^2$, at $99^2=9,981$ points. $T=0.25, h_\pm=\pm 0.05$.
 $X$ is KoBoL of order $\nu=1.2$, $\mu=0.02$, 
$\lp=1,\lm=-2$, $c=c_+=c_-$ is defined from $\psi^{\prime\prime}(0)=0.1$. 
Precision E-14 is achieved in 229 sec, E-7 in 22.6 sec. if the sinh-acceleration is used; if the GWR acceleration is used,
then precision better than E-5 is achieved in 11.1 sec.

\subsection{Shapes of price curves}\label{ss:numer_shapes}
We plot the price curves of the double barrier no-touch option (Fig. \ref{fig:DoubleNTT001nu08mu002})
 digital (Fig. \ref{fig:DoubleDigT001nu008mu002a0}) and call option (Fig. \ref{fig:DoubleCallT001nu008mu002a0})
in the exponential L\'evy model $S_t=e^{X_t}$, close to  maturity date $T=0.01$; 
the barriers are $H_\pm=e^{h_\pm}$, the strike is $K=1=e^0$. 
$X$ is KoBoL of order $\nu=0.8$, $\mu=0.02$, 
$\lp=1,\lm=-2$, $c=c_+=c_-$ is defined from $m_2:=\psi^{\prime\prime}(0)=0.1$. 
Since $\nu<1$ and $\mu>0$, $V(h_-+)>0$. This fact is proved in \cite{BIL} for single barrier options.
Using the standard localization results for boundary problems for elliptic \cite{eskin} and quasi-elliptic (in particular, parabolic)
pseudo-differential operators \cite{DegEllEq,LevPan}, it is possible to prove that $V(h_-+)>0$  for barrier options with two barriers, if $\nu<1$ and $\mu>0$.
Naturally, in the case of the call option of small maturity, and strike half-way between the barriers, the limit is very small.

On the other hand, if the process is close to VG, then, even in the case $\mu=0$, the price may seem to have a positive limit at $h_-+$
because for processes close to VG, the impact of large jumps is comparable to the impact of small fluctuations,
hence,  the Blumenthal 0-1 law manifests itself only very close to the boundary. 
On Fig. \ref{fig:Vntnu02mu0T025}-\ref{fig:Vntnu02mu0T025hm2}, we plot the graph of the call option price. The parameters are $T=0.25, 
K=e^0=1, \nu=0.2, m_2=0.1, \lp=1,\lm=-2, \mu=0$, hence, $V(h_-+0)=0$.
On Fig. \ref{fig:Vntnu02mu0T025}, where the graph on $[h_-+0.001, h_+-0.001]$ is plotted, it seems that $V(h_-+0)>0$.
On Fig. \ref{fig:Vntnu02mu0T025hm2}, upper panel, we plot the price curve in a small vicinity of $h_-$. It is seen that the price started to decrease as
$x$ approaches $h_-$. The lower panel shows the normalized price $V_{call;norm}(x)=V_{call}(x)/(x-h_-)^{\nu/2}$. 

Clearly, if the price is calculated using the calculations in the state space, backward induction and interpolation at each time step,
then, if
the linear interpolation is used and $\De x\ge 10^{-4}$, where $\De x$ is the step of the grid used for the interpolation, 
the error introduced at each step is of the order of $\De x$. Interpolation of higher order may produce huge errors because
even the first derivative of the price tends to infinity as the underlying approaches the barrier.

\subsection{Tables: dependence of precision on the method for the Laplace inversion, time to maturity
and the order of the process}\label{ss:numer_tables} In tables in Section \ref{ss:Fig and tables}, 
the ``benchmark values" are the values obtained using the sinh-acceleration with different deformations of the three contours of integration;
in all examples, we use the algorithm with Step IX(1) (truncation of the series) and $M_0=9$.  We calculate the error of the ``benchmark values" w.r.t.  the values obtained with the scheme with step IX(2)
(using the matrix inverse instead of the truncated sum). We stop increasing $T$ when the truncation error becomes larger than E-15, and larger $M_0$ are necessary to use to satisfy the error tolerance E=15. If $\nu=1.2$, this happens at $T=1$, if $\nu=0.2$, at $T=3$. We determine where to stop looking at the prices of the call option.
If $\nu=0.2$, then the scheme with Step IX(2) achieves accuracy of the order of E-11 at $T=5$; at $T=10$,  high precision arithmetic is needed. If $\nu=1.2$, the scheme with Step IX(2) achieves the precision better than E-16 at $T=3$; since the prices are of the order of $E-12$, the relative error is smaller than $E-03$.
For $T\le 1$,  the differences are of the order of E-15 or better.
If the order $\nu<1$ and drift $\mu\neq 0$, then only the GWR acceleration is applicable. 

\begin{rem}\label{rem: strange} {\em Some rows of prices may seem wrong. For instance, for $T\le 3$ in Table 6,
the sequences of prices are increasing if $T=0.004, 0.25, 3$. The reason is that the grid step is 0.01. For $T=0.004, 0.25,$ the peak of the price curve is on $(0.04,0.005)$, for $T=3$, the peak is on $(0.03,0.04)$. See Fig. \ref{fig:Vntnu02mu0T025} and \ref{fig:Vntnu02mu0T3}.
}
\end{rem} 


\section{Conclusion}\label{concl}
In the paper, we develop a very fast and accurate method for pricing double barrier options with continuous monitoring in
wide classes of  L\'evy models; the method extends the method developed in \cite{EfficientLevyExtremum} for
single barrier options and more general options with barrier/lookback features.  For each $q$ in the Laplace inversion formula (we use the sinh-acceleration in the Bromwich integrals and the Gaver-Wynn Rho acceleration algorithm), the main block  is the evaluation of two series of perpetual first touch options (or their analytic continuation
w.r.t. $q$). The iteration procedure for the evaluation of the series in the operator form is the same as in \cite{BLdouble},
where the calculations are in the state space and the technique of the expected present value operators (EPV-operators) is used. The calculations in the dual space allow for much more efficient calculations.
The iteration procedure in the present paper can be applied in a similar vein to double barrier 
options with discrete monitoring and calculation of joint probability distributions  in stable L\'evy models. See \cite{EfficientDiscExtremum,EfficientStableLevyExtremum} for the single barrier case.
The procedure can be applied to improve the performance of pricing of barrier and American options in regime-switching L\'evy models, approximations of models with stochastic volatility and interest rates with regime-switching models, where the technique
of the EPV-operators was used \cite{ExitRSw,stoch-int-rate-CF,SVolSSRN,BLHestonStIR08,MSdouble,BarrStIR}.

 For wide regions in the parameter space, the precision of the order of $10^{-15}$ is achievable in seconds, and of the order of $10^{-9}-10^{-8}$ - in  fractions of a second. The Wiener-Hopf factors and repeated integrals in the pricing formulas
 are calculated using the sinh-deformation of the lines of integration, the corresponding changes of variables
 and the simplified trapezoid rule. If the Bromwich integral is calculated using the Gaver-Wynn Rho acceleration
 instead of the sinh-acceleration, the CPU time is typically smaller but the precision is of the order of $10^{-9}-10^{-6}$, at best.
Explicit pricing algorithms and numerical examples are for no-touch options, digitals (equivalently, for
the joint distribution function of a L\'evy process and its supremum and infimum processes), and call options.

We produce several graphs to explain fundamental difficulties for accurate pricing of barrier options using the time discretization
or Carr's randomization and interpolation-based calculations in the state space.

\appendix

\section{Figures and tables}\label{ss:Fig and tables}

\begin{table}
\caption{\small Double barrier no-touch option. KoBoL 
close to NIG, with an almost symmetric jump density, and no ``drift": $m_2=0.1, \nu=1.2, \lm=-2, \lp=1, \mu=0$. Riskless rate $r=0$.
Prices and errors (rounded) of the algorithm with the sinh- and GWR-acceleration applied to the Bromwich integral (SINH and
GWR).  (Log-)barriers: $h_-=-0.05, h_+=0.05$. Time to maturity $T=0.004, 0.25, 1$.
 }
 {\tiny
\begin{tabular}{c|ccccc}
\hline\hline
$x$ & -0.04 & -0.02  & 0 & 0.02 & 0.04\\\hline\hline
$T=0.004$ & & & &  \\\hline
$V_{nt}$ & 0.944232464403407 & 0.984791837906914 & 0.988695065999628 & 0.985130282346314 & 0.945243176095013\\
$err_1$ & -4.13E-08	&	-1.176E-08&	-9.88E-09	& 	-1.29E-08	-&	-3.951E-08\\
 $err_2$ & -1.09E-07	 &	6.22E-09	&	9.06E-09	&	1.15E-08	&	-1.12E-07
 \\\hline\hline
 $T=0.25$ & & & &  \\\hline
$V_{nt}$ & 0.0925697509133228 & 0.183597478719832 & 0.216239237263554 & 0.187081211429371 &0.0961682820257716\\
$err_1$ & 1.05E-06	&	6.87E-08	&	7.83E-08	 &	6.67E-08	&	1.03E-06\\
 $err_2$ & 8.49E-06	&	5.92E-06	&	5.63E-07	&	6.22E-06	&	9.79E-06\\\hline\hline

 $T=1$ & & & &  \\\hline
$V_{nt}$ & 0.000488706725350729 & 0.000970205697557125 & 0.00114386828643243 &0.000989805061225368 &
0.000508651147353323\\
$err_1$ & -2.67E-05	&	-9.94E-06 &	-6.30E-06 &		-1.05E-05 &		-2.81E-05\\
 $err_2$ & 1.23E-05	& 	1.39E-05 &	1.01E-05 	&	1.47E-05 &		1.27E-05\\\hline\hline
\end{tabular}
}
\begin{flushleft}{\tiny
$\underline{T=0.004}$. Benchmark values at 9 points: CPU time 15.2 sec, precision is better than E-15. \\
Sizes of arrays for the benchmark:
$N_\ell=417$, $N^-=	214$, $N^+=	241$, $N^\pm_1=	1504$.\\
$\eps_1$: errors of SINH with $N_\ell=112$, $N^-=	63$, $N^+=	84$, $N^\pm_1=298$. CPU time 0.746 sec.\\
$\eps_2$: error of GWR with $M=8$, $N^-=23$, $N^+=38$, $N^\pm_1=93$. CPU time 0.097 sec.

\vskip0.1cm
\noindent

$\underline{T=0.25}$. Benchmark values at 9 points: CPU time 6.80 sec, precision is better than E-15. \\
Sizes of arrays for the benchmark:
$N_\ell=230$, $N^-=	171$, $N^+=	182$, $N^\pm_1=	1193$.\\
$\eps_1$: errors of SINH with $N_\ell=59$, $N^-=	69$, $N^+=	79$, $N^\pm_1=203$. CPU time 0.615 sec.\\
$\eps_2$: error of GWR with $M=8$, $N^-=136$, $N^+=144$, $N^\pm_1=954$. CPU time 0.375 sec.

\vskip0.1cm
\noindent

$\underline{T=1}$. Benchmark values at 9 points: CPU time 9.17 sec, precision is better than E-15. \\
Sizes of arrays for the benchmark:
$N_\ell=215$, $N^-=	198$, $N^+=	222$, $N^\pm_1=	1676$.\\
$\eps_1$: errors of SINH with $N_\ell=29$, $N^-=	29$, $N^+=	48$, $N^\pm_1=117$. CPU time 0.111 sec.\\
$\eps_2$: error of GWR with $M=8$, $N^-=23$, $N^+=38$, $N^\pm_1=93$. CPU time 0.097 sec.

}
\end{flushleft}

\label{table1nu1.2}
 \end{table}

\begin{table}
\caption{\small Double barrier no-touch option. KoBoL 
close to VG, with an almost symmetric jump density, and no drift: $m_2=0.1, \nu=0.2, \lm=-2, \lp=1, \mu=0$. Riskless rate $r=0$.
Prices and errors (rounded) of the algorithm with the sinh- and GWR-acceleration applied to the Bromwich integral (SINH and
GWR).  (Log-)barriers: $h_-=-0.05, h_+=0.05$. Time to maturity $T=0.004, 0.25, 3$.
 }
 {\tiny
\begin{tabular}{c|ccccc}
\hline\hline
$x$ & -0.04 & -0.02  & 0 & 0.02 & 0.04\\\hline\hline
$T=0.004$ & & & &  \\\hline
$V_{nt}$ & 0.997159234166403 & 0.99785039353072 & 0.997988709856923 & 0.997873055193352 & 0.997205955464661\\
$err_1$ & 2.53E-11 &		5.77E-11	&	5.27E-11	& 3.80E-11&	6.86E-11\\
 $err_2$ & 5.13E-10	& 	-8.84E-10 & 	-5.30E-10	&	4.14E-10 &		-1.32E-11 \\\hline\hline
 
 $T=0.25$ & & & &  \\\hline  
$V_{nt}$ &    
0.837255746301533 & 0.872998705974284 & 0.880407965481731 & 0.87420738492834 &
 0.83967624398896\\
$err_1$ & 5.16E-08 &		5.60E-08	&	-5.45E-11	&	2.37E-08 &		2.23E-08\\
 $err_2$ & 2.43E-08	&	3.31E-08 &	2.88E-08	&		2.75E-08	&	2.61E-08\\\hline\hline

 $T=3$ & & & &  \\\hline
$V_{nt}$ & 0.133264677579268 & 0.179416477579805 & 0.192359856627979 &
0.181619873768539 & 0.136797832249264
\\
$err_1$ & -3.33E-08	&	-2.82E-08	&	-2.65E-08 &		-2.70E-08	&	-3.34E-08\\
$err_2$ & 2.48E-08	&	5.42E-08	&	2.05E-08	&	1.49E-08	&	2.25E-09\\

 1.23E-05	& 	1.39E-05 &	1.01E-05 	&	1.47E-05 &		1.27E-05\\\hline\hline
\end{tabular}
}
\begin{flushleft}{\tiny
$\underline{T=0.004}$. Benchmark values at 9 points: CPU time 9.83 sec, precision is better than E-15. \\
Sizes of arrays for the benchmark:
$N_\ell=356$, $N^-=	164$, $N^+=	174$, $N^\pm_1=	1254$.\\
$\eps_1$: errors of SINH with $N_\ell=171$, $N^-=	73$, $N^+=	92$, $N^\pm_1=379$. CPU time 2.037 sec.\\
$\eps_2$: error of GWR with $M=8$, $N^-=59$, $N^+=75$, $N^\pm_1=311$. CPU time 0.171 sec.
\vskip0.1cm
\noindent

$\underline{T=0.25}$. Benchmark values at 9 points: CPU time 6.50 sec, precision is better than E-15. \\
Sizes of arrays for the benchmark:
$N_\ell=230$, $N^-=	177$, $N^+=	188$, $N^\pm_1=	979$.\\
$\eps_1$: errors of SINH with $N_\ell=72$, $N^-=	48$, $N^+=	67$, $N^\pm_1=162$. CPU time 0.377 sec.\\
$\eps_2$: error of GWR with $M=8$, $N^-=136$, $N^+=144$, $N^\pm_1=954$. CPU time 0.402 sec.

\vskip0.1cm
\noindent

$\underline{T=3}$. Benchmark values at 9 points: CPU time 4.45 sec, precision is better than E-15. \\
Sizes of arrays for the benchmark:
$N_\ell=152$, $N^-=	164$, $N^+=	184$, $N^\pm_1=	1397$.\\
$\eps_1$: errors of SINH with 
$N_\ell=34$, $N^-=	47$, $N^+=	60$, $N^\pm_1=190$. CPU time 0.204 sec.\\
$\eps_2$: error of GWR with $M=8$, $N^-=30$, $N^+=40$, $N^\pm_1=112$. CPU time 0.119 sec.

}
\end{flushleft}

\label{table1nu.02}
 \end{table}

\begin{table}
\caption{\small Double barrier digital put. (Log-)barriers: $h_-=-0.05, h_+=0.05$; log-strike $a=-0.01$.
KoBoL 
close to NIG, with an almost symmetric jump density, and no ``drift": $m_2=0.1, \nu=1.2, \lm=-2, \lp=1, \mu=0$. 
Riskless rate $r=0$.
Prices and errors (rounded) of the algorithm with the sinh- and GWR-acceleration applied to the Bromwich integral (SINH and
GWR).  Time to maturity $T=0.004, 0.25, 1$.
 }
 {\tiny
\begin{tabular}{c|ccccc}
\hline\hline
$x$ & -0.04 & -0.02  & 0 & 0.02 & 0.04\\\hline\hline
\\\hline\hline
$T=0.004$ & & & &  \\\hline
$V_{dig}$ & 0.936033131420221 & 0.942743923266939 & 0.0407165015135701 &
0.00756840253469884 & 0.00309395728748227\\
$\eps_1$ & 6.64E-10 &		-3.51E-10	&	3.28E-10	&	5.08E-10	&	-6.12E-10\\
$\eps_2$ & -6.13E-08&	5.34E-09	&	2.57E-09	&	8.58E-10	&	2.05E-09
\\\hline\hline
$T=0.25$ & & & &  \\\hline
$V_{dig}$ & 0.0367167500936684 & 0.0706406957622098 & 0.0786094461300362 & 0.0641355992734415 & 0.0319013015638287\\
$\eps_1$ & 1.48E-07 &		3.23E-08 &		1.88E-08 &		1.36E-08 &	1.08E-08\\
$\eps_2$ & -9.39E-07 &1.27E-06 & -6.34E-06 &-1.97E-06 &-1.83E-06
\\\hline\hline

$T=1$ & & & &  \\\hline
$V_{dig}$ & 0.000177771957381001 & 0.000352921137775741 & 0.000416090671743419 &
0.000360047453580148 & 0.000185024416500812\\
$\eps_1$ & 1.07E-07	&	1.32E-07	&	6.97E-08	&	4.12E-08	&	2.85E-08\\
$\eps_2$ & -1.73E-05&	-6.21E-06	&	8.72E-07	&	9.36E-06	&	8.59E-06\\

\end{tabular}
}
\begin{flushleft}{\tiny
$\underline{T=0.004}$. Benchmark values at 9 points: CPU time 13.8 sec, precision is better than E-15. \\
Sizes of arrays for the benchmark:
$N_\ell=449$, $N^-=	177$, $N^+=	197$, $N^\pm_1=	1088$.\\

$\eps_1$: errors of SINH with $N_\ell=158$, $N^-=	49$, $N^+=	76$, $N^\pm_1=184$. CPU time 1.00 sec.\\
$\eps_2$: error of GWR with $M=8$, $N^-=82$, $N^+=99$, $N^\pm_1=387$. CPU time 0.237 sec.
\vskip0.1cm
\noindent
$\underline{T=0.25}$. Benchmark values at 9 points: CPU time 8.81 sec, precision is better than E-15. \\
Sizes of arrays for the benchmark:
$N_\ell=305$, $N^-=	167$, $N^+=	178$, $N^\pm_1=	1891$.\\

$\eps_1$: errors of SINH with $N_\ell=86$, $N^-=	46$, $N^+=	65$, $N^\pm_1=152$. CPU time 0.413 sec.\\
$\eps_2$: error of GWR with $M=8$, $N^-=136$, $N^+=144$, $N^\pm_1=731$. CPU time 0.399 sec.
\vskip0.1cm
\noindent
$\underline{T=1}$. Benchmark values at 9 points: CPU time 9.29 sec, precision is better than E-15. \\
Sizes of arrays for the benchmark:
$N_\ell=259$, $N^-=	200$, $N^+=	241$, $N^\pm_1=	1410$.\\

$\eps_1$: errors of SINH with $N_\ell=51$, $N^-=	43$, $N^+=	69$, $N^\pm_1=157$. CPU time 0.267 sec.\\
$\eps_2$: error of GWR with $M=8$, $N^-=84$, $N^+=110$, $N^\pm_1=422$. CPU time 0.273 sec.

}
\end{flushleft}

\label{table2nu1.2}
\end{table}

\begin{table}
\caption{\small Double barrier digital put. (Log-)barriers: $h_-=-0.05, h_+=0.05$; log-strike $a=-0.01$.
KoBoL 
close to VG, with an almost symmetric jump density, and no ``drift": $m_2=0.1, \nu=0.2, \lm=-2, \lp=1, \mu=0$. 
Riskless rate $r=0$.
Prices and errors (rounded) of the algorithm with the sinh- and GWR-acceleration applied to the Bromwich integral (SINH and
GWR).  Time to maturity $T=0.004, 0.25, 3$.
 }
 {\tiny
\begin{tabular}{c|ccccc}
\hline\hline
$x$ & -0.04 & -0.02  & 0 & 0.02 & 0.04\\\hline\hline

$T=0.004$ & & & &  \\\hline
$V_{dig}$ & 0.996564411171869 & 0.99657921660561 & 0.00112224938146623 &
0.000499111365249944 & 0.00031397891513379\\
$\eps_1$ & 6.64E-10 &		-3.51E-10	&	3.28E-10	&	5.08E-10	&	-6.12E-10\\
$\eps_2$ & -1.81E-10 &	 -1.23E-10	 &	5.16E-11&	2.89E-11&	-3.29E-14
\\\hline\hline
$T=0.25$ & & & &  \\\hline
$V_{dig}$ & 0.806048752314656 & 0.808342339586413 & 0.0564741685789332 & 0.0263104606596591 &
0.0165845924903516\\
$\eps_1$ & 3.56E-08	&	2.30E-08	&	1.86E-08	&	1.53E-08 &		1.28E-08\\
$\eps_2$ & 3.57E-08	&	1.618E-08	&	7.169E-09	& 3.163E-08	&	2.758E-08
\\\hline\hline

$T=3$ & & & &  \\\hline
$V_{dig}$ & 0.0826482000571708 & 0.0963948765801401 & 0.064756560941078 & 0.0445502961594203 
& 0.0292378743031199\\

$\eps_1$ & 7.40E-08&	4.54E-08	&	4.136E-08	 &	1.99E-08& 3.347E-08\\
$\eps_2$ & -9.69E-07 &		4.54E-07	&	8.08E-08	&	-3.42E-07	&	-1.72E-08\\

\end{tabular}
}
\begin{flushleft}{\tiny
$\underline{T=0.004}$. Benchmark values at 9 points: CPU time 6.92 sec, precision is better than E-15. \\
Sizes of arrays for the benchmark:
$N_\ell=230$, $N^-=	165$, $N^+=	199$, $N^\pm_1=	1175$.\\
$\eps_1$: errors of SINH with $N_\ell=143$, $N^-=	51$, $N^+=	69$, $N^\pm_1=178$. CPU time 0.861sec.\\
$\eps_2$: error of GWR with $M=8$, $N^-=69$, $N^+=91$, $N^\pm_1=362$. CPU time 0.290 sec.
\vskip0.1cm
\noindent
$\underline{T=0.25}$. Benchmark values at 9 points: CPU time 8.81 sec, precision is better than E-15. \\
Sizes of arrays for the benchmark:
$N_\ell=230$, $N^-=	165$, $N^+=	199$, $N^\pm_1=	117$.\\

$\eps_1$: errors of SINH with $N_\ell=72$, $N^-=	50$, $N^+=	75$, $N^\pm_1=193$. CPU time 0.414 sec.\\
$\eps_2$: error of GWR with $M=8$, $N^-=32$, $N^+=51$, $N^\pm_1=731$. CPU time 0.075 sec.

\vskip0.1cm
\noindent
$\underline{T=3}$. Benchmark values at 9 points: CPU time 5.56 sec, precision is better than E-15. \\
Sizes of arrays for the benchmark:
$N_\ell=164$, $N^-=	189$, $N^+=	213$, $N^\pm_1=	1950$.\\

$\eps_1$: errors of SINH with $N_\ell=27$, $N^-=	43$, $N^+=	56$, $N^\pm_1=183$. CPU time 0.145 sec.\\
$\eps_2$: error of GWR with $M=8$, $N^-=51$, $N^+=67$, $N^\pm_1=309$. CPU time 0.143 sec.

}
\end{flushleft}

\label{table2nu0.2}
\end{table}

\begin{table}
\caption{\small Double barrier call. (Log-)barriers: $h_-=-0.05, h_+=0.05$; log-strike $a=0$.
KoBoL 
close to NIG, with an almost symmetric jump density, and no ``drift": $m_2=0.1, \nu=1.2, \lm=-2, \lp=1, \mu=0$. 
Riskless rate $r=0$.
Prices and errors (rounded) of the algorithm with the sinh- and GWR-acceleration applied to the Bromwich integral (SINH and
GWR).  Time to maturity $T=0.004,  0.25, 1, 3$.
 }
 {\tiny
\begin{tabular}{c|ccccc}
\hline\hline
$x$ & -0.04 & -0.02  & 0 & 0.02 & 0.04\\\hline\hline
\underline{$T=0.004$} & & & &  \\\hline
$V_{call}$ & 0.0000824404624213168 & 0.000204166217898571 & 0.00191177395953069 &
0.0197639609828199 & 0.0376099511198009\\
$\eps_1$ & -3.88E-10 &		-5.43E-10	&	-7.59E-10	& -1.25E-09	&	-3.68E-09\\
$\eps_2$ & 3.48E-11	&	-5.23E-10	&	4.67E-10	&	5.85E-10	&	-5.81E-09\\\hline

\underline{$T=0.25$} & & & &  \\\hline
$V_{call}$ & 0.00082409621567 & 0.00169255391432158 & 0.00212237950197215 & 0.00195192815759268
& 0.0010370403093618\\
$\eps_1$ & -2.24E-07	&	-3.00E-07	&-5.51E-07&	-4.50E-07	&	-4.68E-07\\
$\eps_2$ &  -1.92E-07	
&	-2.02E-07&	-4.01E-07	&	-2.26E-07	&	-1.29E-07\\\hline

\underline{$T=1$} & & & &  \\\hline
$V_{call}$ & 4.80103589597936E-06 & 9.53128787591073E-06 &0.0000112373996155185 & 9.72392418717438E-06
& 4.99704769890697E-06\\
$\eps_1$ & 5.93E-11	 &	7.07E-11	&	1.09E-10	&	2.40E-10	&	5.45E-10\\
$\eps_2$ & 4.41E-06 &		7.80E-06	&	7.86E-06	&	5.65E-06	&	2.50E-06\\
$\eps_3$ & 6.43E-13 &	9.80E-13	&	1.64E-12	&	3.09E-12	&	7.31E-12\\\hline
$T=3$ & & & &  \\\hline
$V_{call}$ & 4.074629523	&		8.089334758	&	9.537398649	&	8.252398764	&	4.240829909\\
$\eps$ & 1.67E-16&	-5.55E-17 &			-2.78E-17	& 2.78E-16	&	-4.44E-16\\
$rel_\eps$
& 1.39E-04	&	-8.33E-05	&	-2.78E-05	&	3.05E-04	&	-3.61E-04
\\\hline
 \end{tabular}
}
\begin{flushleft}{\tiny
\underline{\bf $T=0.004$.} Benchmark values at 9 points: CPU time 14.5 sec, precision is better than E-16. \\
Sizes of arrays for the benchmark:
$N_\ell=464$, $N^-=	180$, $N^+=	193$, $N^\pm_1=	1410$.\\
$\eps_1$: errors of SINH with $N_\ell=109$, $N^-=	73$, $N^+=	89$, $N^\pm_1=272$. CPU time 1.30 sec.\\
$\eps_2$: error of GWR with $M=8$, $N^-=58$, $N^+=78$, $N^\pm_1=349$. CPU time 0.128 sec.
\vskip0.1cm
\noindent 
\underline{\bf $T=0.25$.} Benchmark values at 9 points: CPU time 8.81 sec, precision is better than E-15. \\
Sizes of arrays for the benchmark:
$N_\ell=254$, $N^-=	175$, $N^+=	188$, $N^\pm_1=	1376$.\\
$\eps_1$: errors of SINH with $N_\ell=16$, $N^-=	27$, $N^+=	33$, $N^\pm_1=71$. CPU time 0.073 sec.\\
$\eps_2$: errors of GWR with $M=8$, $N^-=31$, $N^+=44$, $N^\pm_1=126$. CPU time 0.072 sec.

\vskip0.1cm
\noindent 
\underline{\bf $T=1$.} ``Benchmark values" (BB values) at 9 points: CPU time 9.82 sec, precision is better than E-15 if schemes with different deformations but the same $M_0=9$ are compared . \\
Sizes of arrays for the benchmark:
$N_\ell=276$, $N^-=	208$, $N^+=	222$, $N^\pm_1=	1312$.\\
$\eps_1$: errors of SINH with $N_\ell=59$, $N^-=	50$, $N^+=	66$, $N^\pm_1=968$. CPU time 0.317 sec.\\
$\eps_2$: errors of GWR with $M=8$, $N^-=152$, $N^+=162$, $N^\pm_1=126$. CPU time 0.521 sec.\\
$\eps_3$: errors of BB values w.r.t. to the scheme using IX(2) (matrix inversion instead of truncation of the series).
\vskip0.1cm
\noindent 
\underline{\bf $T=3$.}
Values ($V_{call}$) in units of $10^{-12}$, differences (in units of 1) and relative differences between values calculated using different deformations ($\eps$ and $rel_\eps$); step IX(2) is used.\\
Errors of GWR and Step IX(1) with any $M_0$ (and double precision arithmetic) are of the order of the values. 
}
\end{flushleft}
\label{table3nu1.2}
\end{table}

\begin{table}
\caption{\small Double barrier call. (Log-)barriers: $h_-=-0.05, h_+=0.05$; log-strike $a=0$.
KoBoL 
close to VG, with an almost symmetric jump density, and no ``drift": $m_2=0.1, \nu=0.2, \lm=-2, \lp=1, \mu=0$. 
Riskless rate $r=0$.
Prices and errors (rounded) of the algorithm with the sinh- and GWR-acceleration applied to the Bromwich integral (SINH and
GWR).  Time to maturity $T=0.004,  0.25, 3, 5$.
 }
 {\tiny
\begin{tabular}{c|ccccc}
\hline\hline
$x$ & -0.04 & -0.02  & 0 & 0.02 & 0.04\\\hline\hline
\underline{$T=0.004$} & & & &  \\\hline
$V_{call}$ & 8.77581627294074E-06 & 0.0000140126705130402 &
0.0000365669756568704 & 0.0201495491781489 & 0.0406572530052907\\
$\eps_1$ &1.88E-10	&3.45E-11	&	-3.96E-10	&	3.207E-12	&	-3.87E-10\\
$\eps_2$ & -1.45E-10	&	2.00E-10	&	4.10E-10	&	4.56E-10	&	1.02E-09\\\hline

\underline{$T=0.25$} & & & &  \\\hline
$V_{call}$ & 0.000461949526934386&
0.000737018983172683 &
0.001757476257127& 
0.0171789730510928 &
0.032229359938136
\\
$\eps_1$ & -3.62E-10	&	-5.38E-10	&	-8.39E-10	&	-6.50E-10	&	-1.15E-09
\\
$\eps_2$ & -6.14E-09 &		5.54E-09 &		9.84E-09	&	-5.37E-09	&	-5.705E-09
\\\hline

\underline{$T=3$ }& & & &  \\\hline
$V_{call}$ & 0.00080243923836716 & 0.00124070584073691 & 0.00182400986139031 & 0.00268682684255628 &
0.00274066739944123\\

$\eps_1$ & -6.20E-08	&	-6.93E-08	&	-7.87E-08	&	-9.20E-08	&	-1.17E-07\\
$\eps_2$ & 1.11E-08	 &	4.56E-08	&	-1.90E-07	&	-9.07E-08	&-1.388E-07\\\hline
$T=5$ & & & &  \\\hline
$V_{call}$ & 
0.000327733526265667 & 0.000493515765859709 & 0.000630026253687271
& 0.000720082418558615
& 0.000595670818077265\\
$\eps$ & 5.61E-12&	5.70E-12 &		5.802E-12	 &	5.90E-12	&	6.00E-12\\\hline
 \end{tabular}
}
\begin{flushleft}{\tiny
{\bf $T=0.004$.} Benchmark values at 9 points: CPU time 19.1 sec, precision is better than E-16. \\
Sizes of arrays for the benchmark:
$N_\ell=649$, $N^-=	173$, $N^+=	186$, $N^\pm_1=	1175$.\\

$\eps_1$: errors of SINH with $N_\ell=127$, $N^-=	54$, $N^+=	65$, $N^\pm_1=178$. CPU time 0.749 sec.\\
$\eps_2$: error of GWR with $M=8$, $N^-=52$, $N^+=68$, $N^\pm_1=207$. CPU time 0.113 sec.
\vskip0.1cm
\noindent 
{\bf $T=0.25$.} Benchmark values at 9 points: CPU time 7.63 sec, precision is better than E-16. \\
Sizes of arrays for the benchmark:
$N_\ell=255$, $N^-=	173$, $N^+=	186$, $N^\pm_1=	1175$.\\
$\eps_1$: errors of SINH with $N_\ell=79$, $N^-=	54$, $N^+=	65$, $N^\pm_1=178$. CPU time 0.449 sec.\\

$\eps_2$: error of GWR with $M=8$, $N^-=31$, $N^+=44$, $N^\pm_1=126$. CPU time 0.066 sec.

\vskip0.1cm
\noindent 
{\bf $T=3$.} Benchmark values at 9 points: CPU time 9.82 sec, precision is better than E-15.  Errors of BB values w.r.t. to the scheme using IX(2) (matrix inversion instead of truncation of the series) are marginally larger than E-15 at some points, and smaller than E-15 at other points.
Sizes of arrays for the benchmark:
$N_\ell=182$, $N^-=	173$, $N^+=	186$, $N^\pm_1=	1175$.\\
$\eps_1$: errors of SINH with $N_\ell=49$, $N^-=	59$, $N^+=	75$, $N^\pm_1=233$. CPU time 0.316 sec.\\
$\eps_2$: error of GWR with $M=8$, $N^-=38$, $N^+=51$, $N^\pm_1=143$. CPU time 0.096 sec.
\vskip0.1cm
\noindent 
\underline{\bf $T=5$.}
Values ($V_{call}$) and differences between values calculated using different deformations ($\eps$); Step IX(2) is used.
If GWR with Step IX(2) is used, the errors are of the order of E-07. \\
Errors of GWR and Step IX(1) with any $M_0$ (and double precision arithmetic) are of the order of the values. 

}
\end{flushleft}
\label{table3nu0.2}
\end{table}

\begin{figure}
\scalebox{0.75}
{\includegraphics{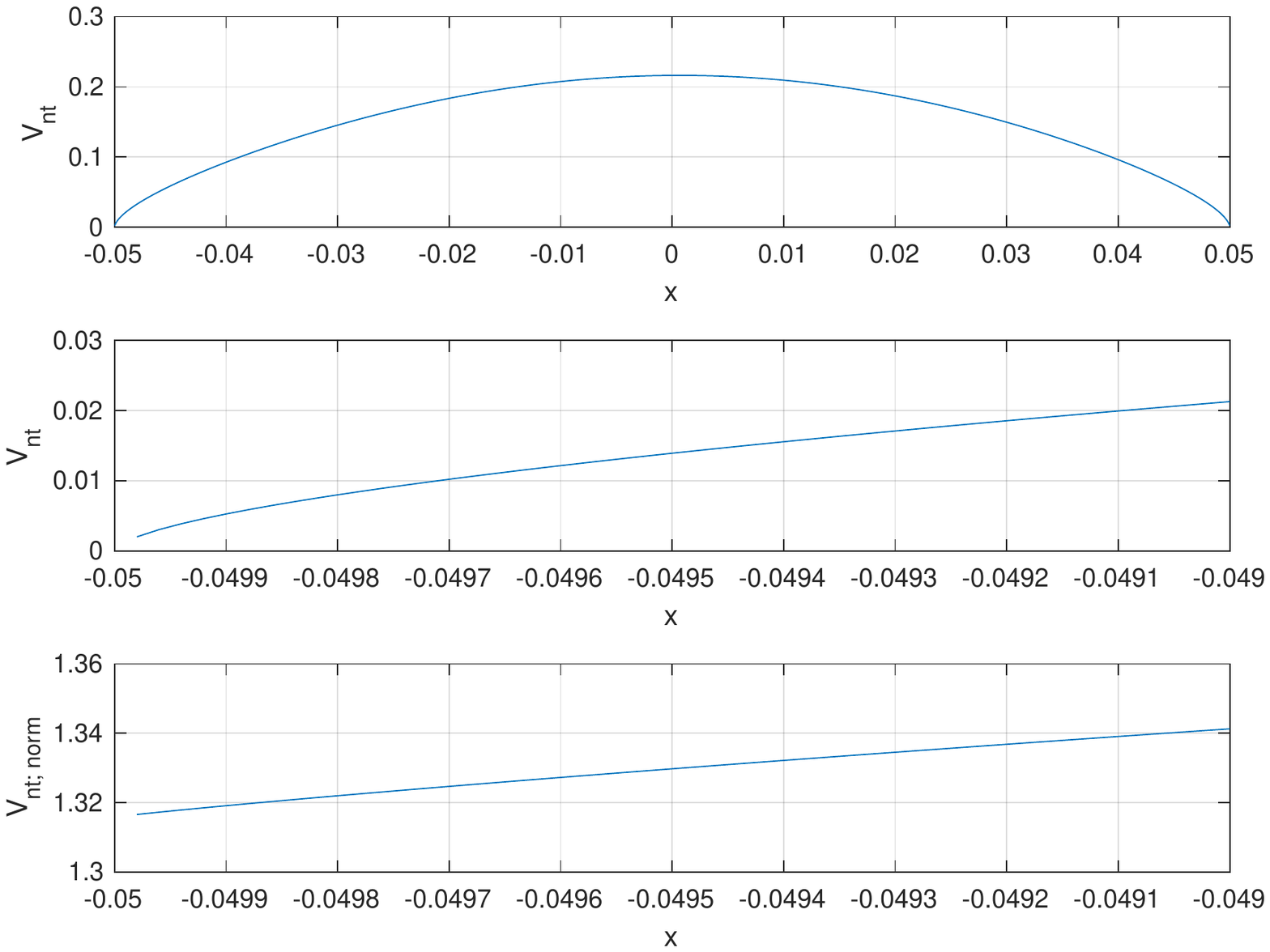}}
\caption{\small Double no-touch option, $h_-=-0.05, h_+=0.05$, $T=0.25$,  $\nu=1.2, m_2=0.1, \lp=1, \lm=-2, \mu=0$.}
\label{fig:DoubleNT4999_2}
\end{figure}

}
\end{example} 

\begin{figure}
\scalebox{0.75}
{\includegraphics{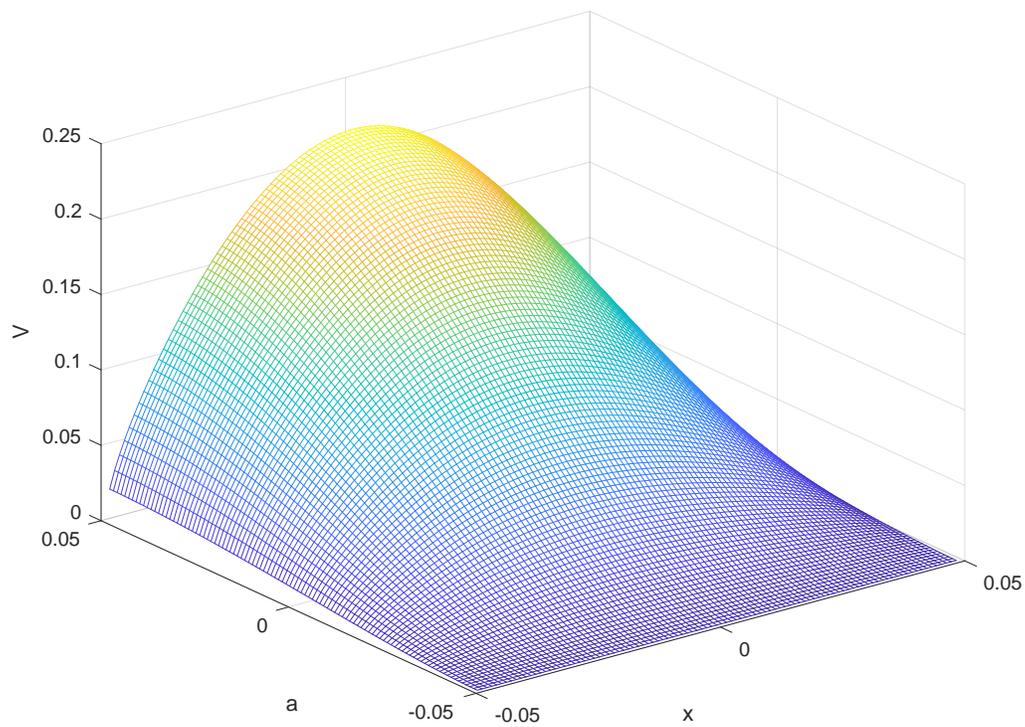}}
\caption{\small Double barrier digital, $h_-=-0.05, h_+=0.05$, $T=0.25$,  $\nu=1.2, m_2=0.1, \lp=1, \lm=-2, \mu=0$.}
\label{fig:VdoubleDig}
\end{figure}

\begin{figure}
\scalebox{0.75}
{\includegraphics{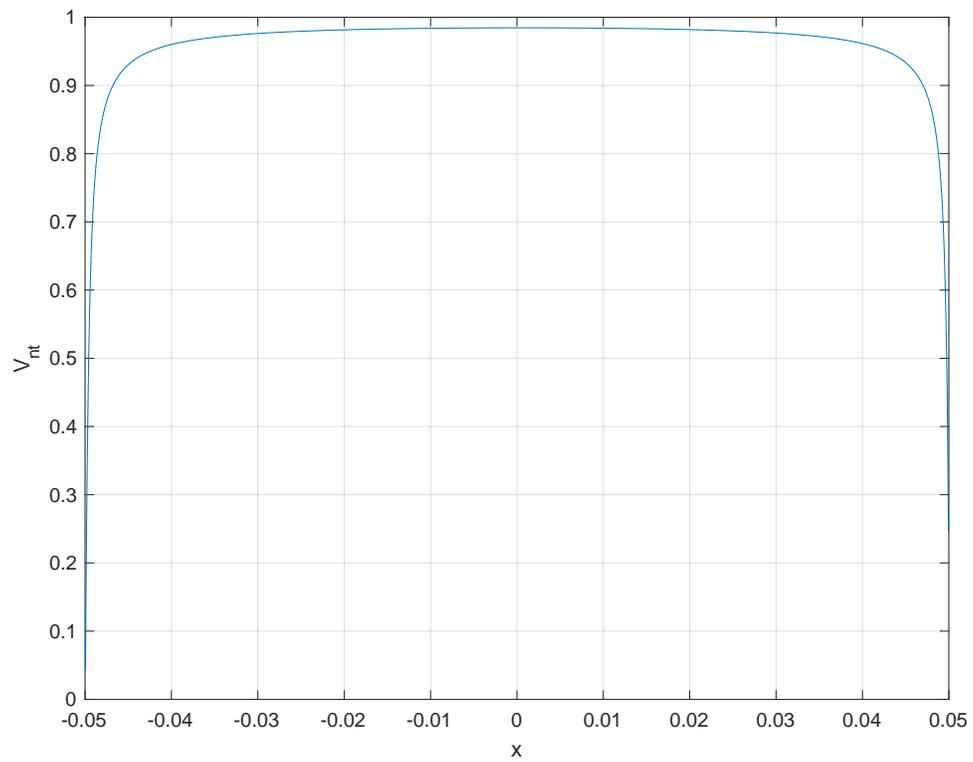}}
\caption{\small Double barrier no-touch option, $h_-=-0.05, h_+=0.05$,  $T=0.01$, $\nu=0.8, m_2=0.1, \lp=1, \lm=-2, \mu=0$.}
\label{fig:DoubleNTT001nu08mu002}
\end{figure}

\begin{figure}
\scalebox{0.75}
{\includegraphics{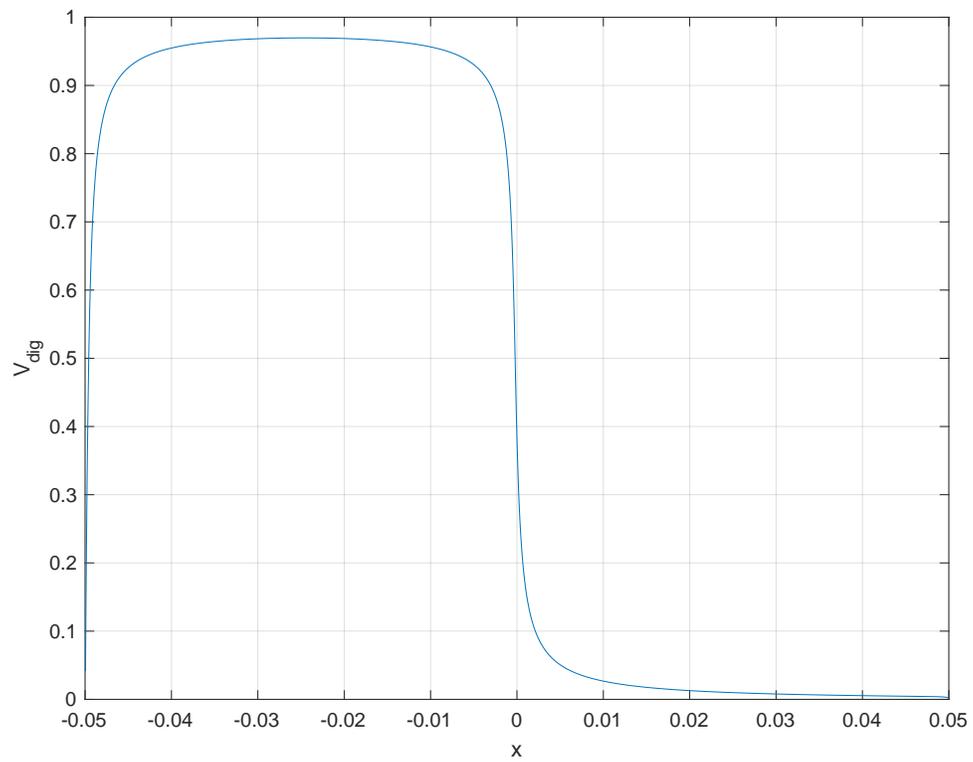}}
\caption{\small Double barrier digital, $h_-=-0.05, h_+=0.05$,  $T=0.01$, strike $K=e^0=1$, $\nu=0.8, m_2=0.1, \lp=1, \lm=-2, \mu=0$.}
\label{fig:DoubleDigT001nu008mu002a0}
\end{figure}

\begin{figure}
\scalebox{0.75}
{\includegraphics{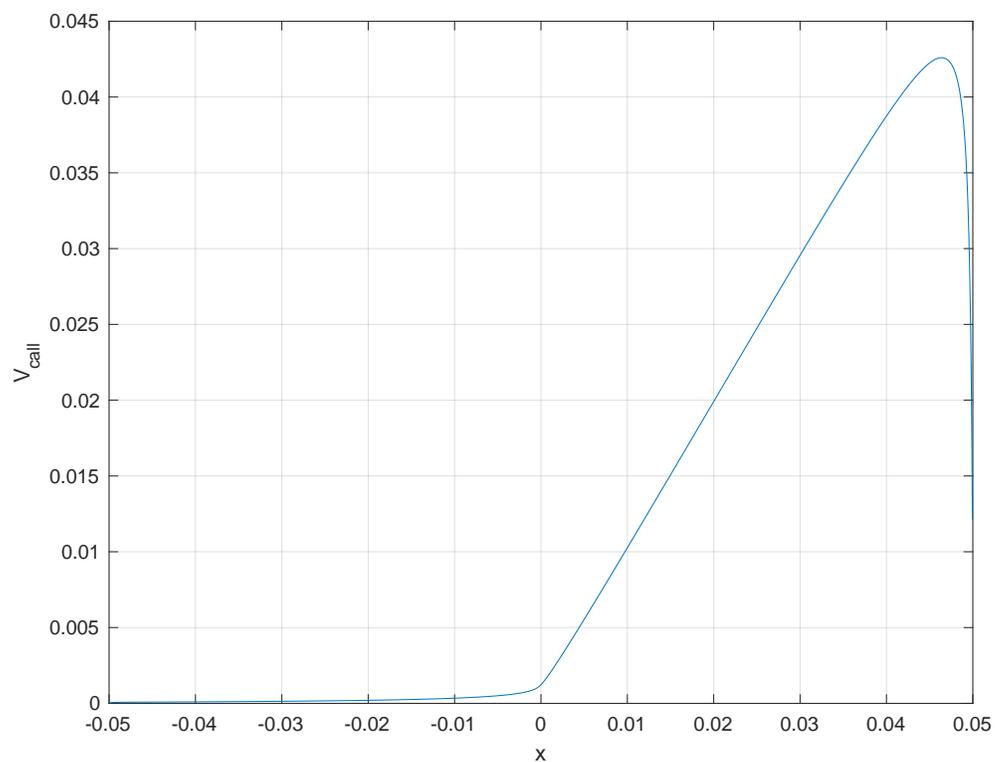}}
\caption{\small Double barrier call option, $h_-=-0.05, h_+=0.05$, $T=0.01$, strike $K=e^0=1$, $\nu=0.8, m_2=0.1, \lp=1, \lm=-2, \mu=0$.}
\label{fig:DoubleCallT001nu008mu002a0}
\end{figure}

\begin{figure}
\scalebox{0.75}
{\includegraphics{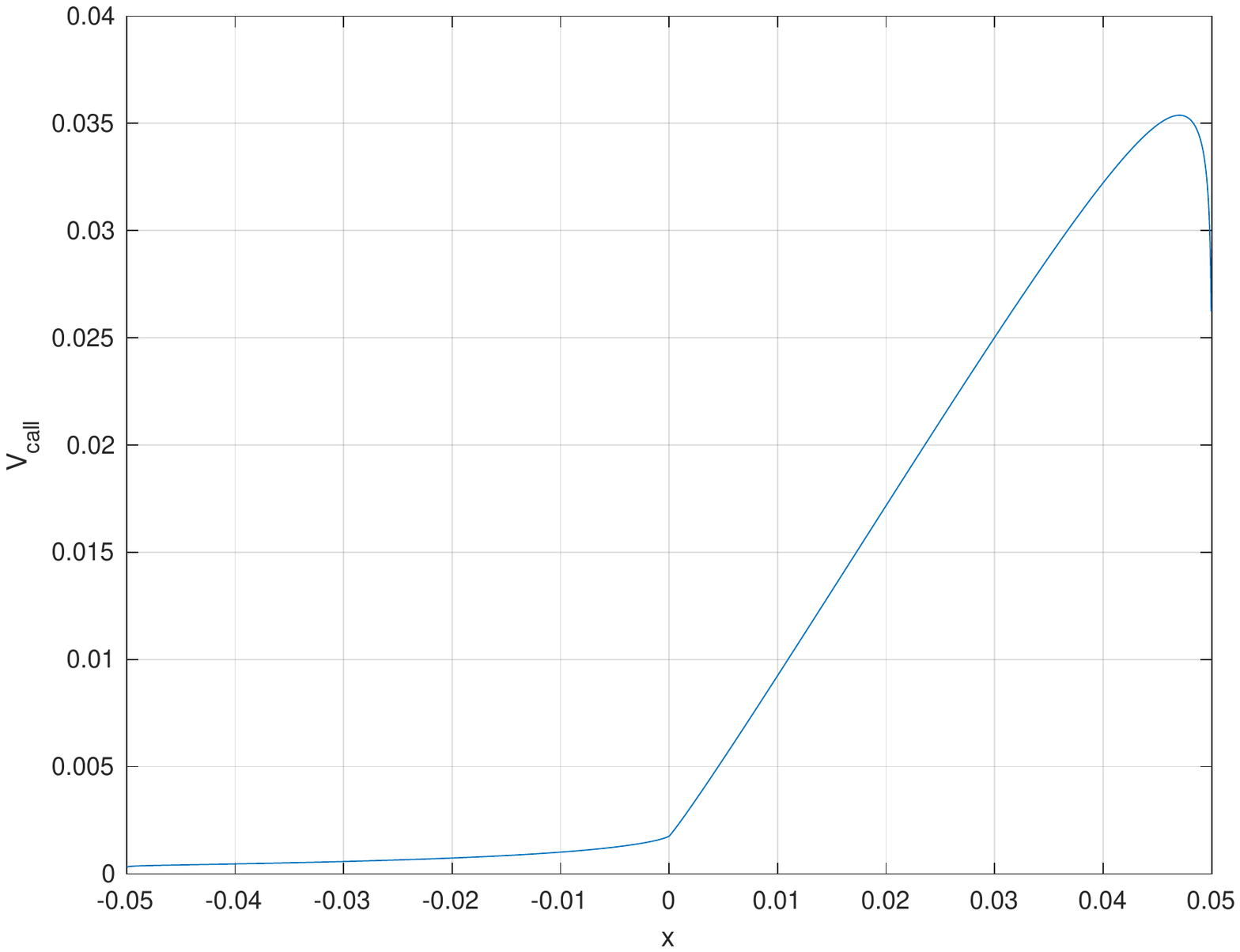}}
\caption{\small Double barrier call option, $h_-=-0.05, h_+=0.05$,  $T=0.25$, $K=1$, $\nu=0.2, m_2=0.1, \lp=1, \lm=-2, \mu=0$.} 
\label{fig:Vntnu02mu0T025}
\end{figure}

\begin{figure}
\scalebox{0.75}
{\includegraphics{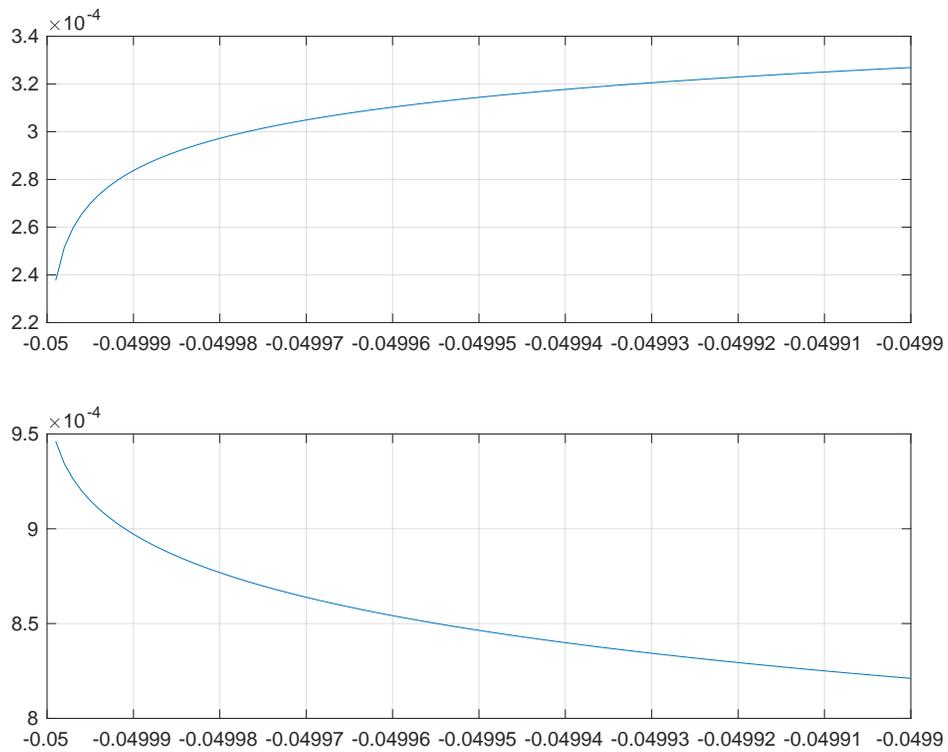}}
\caption{\small Double barrier call option, $h_-=-0.05, h_+=0.05$,  $T=0.25$, strike $K=e^0=1$. $K=1$, $\nu=0.2, m_2=0.1, \lp=1, \lm=-2, \mu=0$.
Upper panel; the call price. Lower panel: the normalized price $V_{call;norm}(x)=V_{call}(x)/(x-h_-)^{\nu/2}$}
\label{fig:Vntnu02mu0T025hm2}
\end{figure}
\begin{figure}
\scalebox{0.75}
{\includegraphics{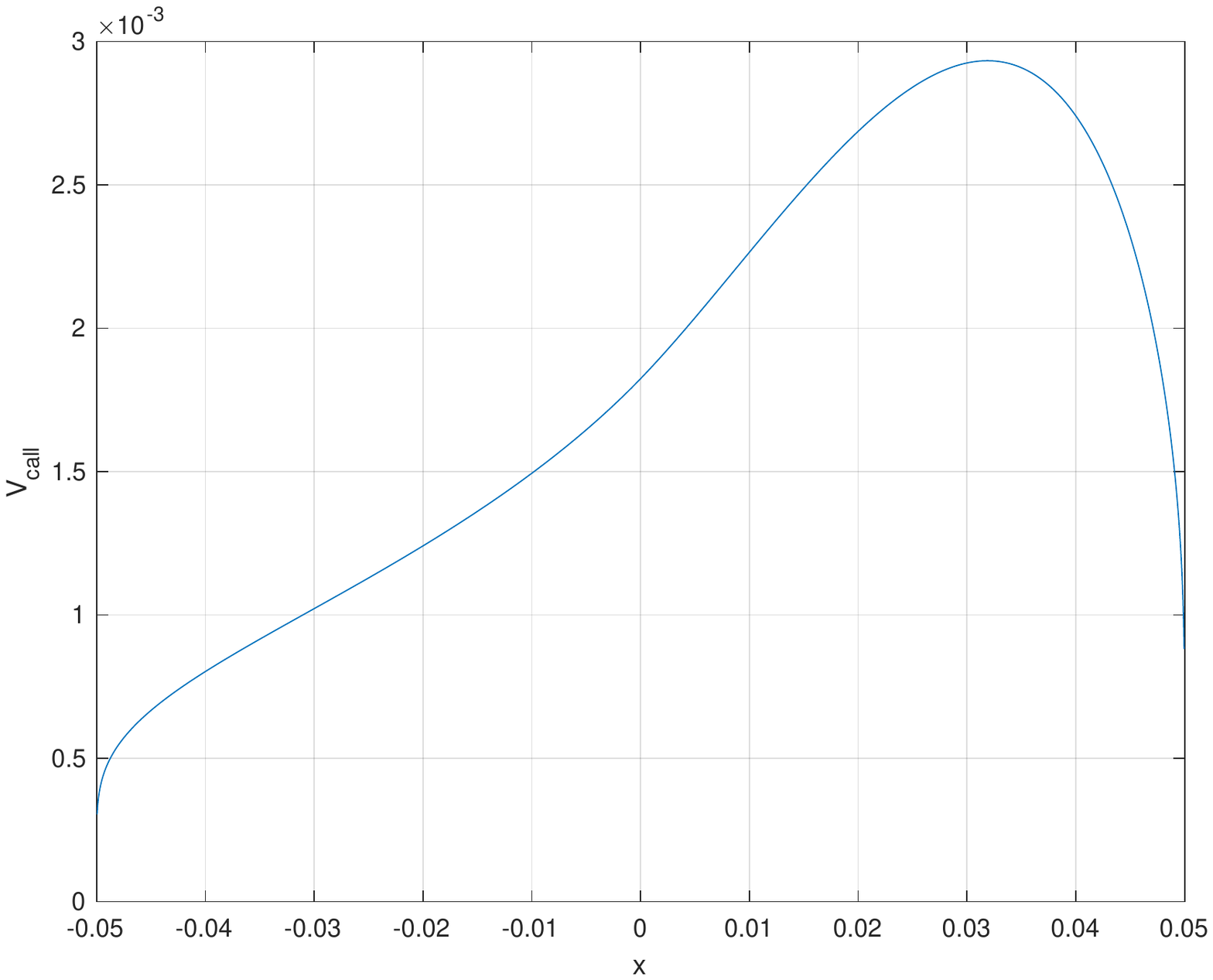}}
\caption{\small Double barrier call option, $h_-=-0.05, h_+=0.05$, $T=3$, $K=1$, $\nu=0.2, m_2=0.1, \lp=1, \lm=-2, \mu=0$.} 
\label{fig:Vntnu02mu0T3}
\end{figure}

\end{document}